\newtheorem{theorem}{Theorem} 
\newtheorem{lemma}[theorem]{Lemma}
\newtheorem{remark}{Remark} 
\begin{document}

\title{Age of Information Minimization in Goal-Oriented Communication with Processing and Cost of Actuation Error Constraints}

\author{Rishabh S. Pomaje, Jayanth S., Rajshekhar V. Bhat and Nikolaos Pappas%

\thanks{Rishabh S. Pomaje and Rajshekhar V. Bhat are with the Department of Electrical, Electronics, and Communication Engineering, Indian Institute of Technology Dharwad, Dharwad 580011, India (e-mail: \href{mailto:210020036@iitdh.ac.in}{210020036@iitdh.ac.in}; \href{mailto:rajshekhar.bhat@iitdh.ac.in}{rajshekhar.bhat@iitdh.ac.in}). 

Jayanth S. is currently with Nokia Solutions and Networks, Bengaluru, India (e-mail: \href{mailto:jayanth.s@nokia.com}{jayanth.s@nokia.com}). Nikolaos Pappas is with the Department of Computer and Information Science, Linköping University, Linköping 58183, Sweden (e-mail: \href{mailto:nikolaos.pappas@liu.se}{nikolaos.pappas@liu.se}). 

The work of Rajshekhar V. Bhat was supported by the Telecom Technology Development Fund (TTDF) under the TTDF scheme of the Department of Telecommunications (DoT), Government of India, Proposal ID: TTDF/6G/492, implemented through TCOE India. 

The work of N. Pappas has been supported in part by the Swedish Research Council (VR), ELLIIT, and the European Union (ETHER, 101096526, ROBUST-6G, 101139068, and 6G-LEADER, 101192080).}
}



\maketitle

\begin{abstract}
We study a goal-oriented communication system in which a source monitors an environment that evolves as a discrete-time, two-state Markov chain. At each time slot, a controller decides whether to sample the environment and if so whether to transmit a raw or processed sample, to the controller. Processing improves transmission reliability over an unreliable wireless channel, but incurs an additional cost. The objective is to minimize the long-term average age of information (AoI), subject to constraints on the costs incurred at the source and the cost of actuation error (CAE), a semantic metric that assigns different penalties to different actuation errors. Although reducing AoI can potentially help reduce CAE, optimizing AoI alone is insufficient, as it overlooks the evolution of the underlying process. For instance, faster source dynamics lead to higher CAE for the same average AoI, and different AoI trajectories can result in markedly different CAE under identical average AoI. To address this, we propose a stationary randomized policy that achieves an average AoI within a bounded multiplicative factor of the optimal among all feasible policies. Extensive numerical experiments are conducted to characterize system behavior under a range of parameters. These results offer insights into the feasibility of the optimization problem, the structure of near-optimal actions, and the fundamental trade-offs between AoI, CAE, and the costs involved. 
\end{abstract}
\begin{IEEEkeywords}
Age and semantics of information, cost of actuation error, stationary randomized policy, multiplicative performance bound. 
\end{IEEEkeywords}

\section{Introduction}
The timely and accurate delivery of information is crucial in real-time, mission-critical systems, including networked control systems, where a source that represents a physical process or device is often monitored remotely. This involves sensing and communicating measurements from the source to a monitor (typically a controller or decision-making unit), which then triggers actions at the same or different parts of the system based on the received information or the estimated state of the source.
The sensed or communicated raw data often need to be processed to infer the state of the source. Consider a simple scenario in which the processed data indicates whether the source is operating normally or abnormally, and an actuation needs to be performed for each state. In this case, a cost is associated with stale or inaccurate information at the monitor, as it can lead to incorrect or outdated actuation. This can be captured by the cost of actuation error (CAE) \cite{Pappas2021ICAS}. This cost is often asymmetric: staleness (i.e., the age of information (AoI) relative to the current state of the source) or inaccuracy when the source is in an abnormal state usually incurs significantly higher costs than similar imperfections during normal operation, though this depends on the specific application \cite{Luo2024semantic-aware,luo2025cost}. In this work, we consider the semantic property captured by CAE, as well as the staleness of information as measured by the AoI, to minimize the average AoI subject to constraints on the CAE and other constraints at the source related to transmission.

Traditional metrics such as latency do not adequately reflect data freshness, whereas AoI captures it by measuring the time elapsed since the last received update was generated \cite{kaul2012realtime}. By accounting for information staleness, AoI represents a significant step toward incorporating semantics. Comprehensive surveys on AoI and its variants are available in \cite{yates2021survey, kahraman2024iotsurvey, wang2023wirelesssurvey}, with research focusing on scheduling under various constraints and trade-offs with energy and transmission costs \cite{saurav2021cost, inan2021optimal, js2023distortion}. A range of optimization tools has been employed, including Markov decision processes (MDPs) and constrained MDPs \cite{agarwal2021energy, xu2024optimal, js2023distortion, ggb2021distortion, inan2021optimal, jayanth2023aopi}, Lyapunov techniques for online scheduling \cite{jayanth2023aopi, kadota2019scheduling, raikar2024reported}, competitive analysis for performance guarantees \cite{saurav2021cost, kadota2019scheduling}, and convex optimization for tractable policy design \cite{ggb2021distortion}.

Despite this extensive body of work, AoI captures only the timeliness of information, which is an innate semantic property, but neglects other contextual properties that are central to semantic communication \cite{Kountouris2021Semantics,Ayan2019AoIvsVoI}. In this context, both the timeliness and semantic relevance of information and their impact on actuation are critical \cite{Lan2021Semantic}, as in applications like autonomous vehicle control \cite{chen2025age} and real-time actuation from dynamic sources \cite{salimnejad2024realtime}, cyber-physical and autonomous networked systems \cite{Kountouris2021Semantics}. In this work, inspired by \cite{trevlakis2024toward, Luo2024semantic-aware, luo2025cost}, we consider AoI jointly with the CAE to capture both freshness and semantic relevance for actuation in a goal-oriented communication system. 

To capture the quality of information alongside AoI in goal-oriented communication systems, several works have examined the trade-off between AoI and distortion, often quantified as the mean squared error between the true value and the value received at the destination. While frequent sampling reduces AoI \cite{kadota2019scheduling}, compressing samples can further lower transmission delay and cost by sending fewer bits, at the expense of increased distortion \cite{ggb2021distortion, rajaraman2021aqi, bastopcu2019distortion}. In such settings, any received update, regardless of quality, contributes to reducing AoI. These studies demonstrate that tolerating higher distortion enables more frequent updates, revealing a fundamental trade-off between AoI and information quality. To jointly capture timeliness and content relevance, \cite{peng2024goal} introduces and optimizes the age of changed information.

In goal-oriented communication, prediction-based transmission is frequently employed. If future samples can be accurately estimated at the receiver, for instance based on the techniques in \cite{Chakravorty2017TAC}, transmissions can be skipped, saving both time and communication costs \cite{kam2020age, joshi2021minimization}. When a higher estimation error is tolerable, the receiver can treat its estimate as sufficiently close to the true source state for a longer duration, deferring updates until the error exceeds a predefined threshold. In this setting, updates are triggered only when the estimation error exceeds the acceptable bound, and AoI is considered to increase only beyond this threshold. Thus, a higher tolerable error leads to fewer transmissions and lower effective AoI, further highlighting the trade-off between freshness and semantic accuracy.

Many works have modeled sources as finite-state Markov chains in goal-oriented communication systems, where the receiver executes actuations based on its estimate of the source state \cite{salimnejad2024realtime,zakeri2025semantic,salimnejad2023state,gao2025goal}. In such settings, incorrect or delayed estimates can lead to incorrect or outdated actions, resulting in actuation errors. When no sample is received, the destination may rely on previous estimates or employ predictive estimators. Unlike standard estimation metrics such as mean squared error, these works quantify error as the cost of actuation error, reflecting the penalty or loss incurred due to incorrect or suboptimal actuation \cite{js2023distortion}. 

\subsection{Contributions}
While minimizing AoI through frequent transmissions can reduce state estimation error and thus actuation error, optimizing AoI alone may neglect the dynamics of the source process. Specifically, CAE captures a context attribute: for the same AoI, rapidly evolving sources induce higher CAE than slowly changing ones. Moreover, identical average AoI can lead to different CAE levels depending on the AoI trajectory. 

To address this, in this work, we consider a sensing system monitoring a discrete source, where AoI resets to a constant upon successful update and increases otherwise. We also track CAE, defined as the mismatch between the estimated state at the destination and the actual process state. Our goal is to optimize the average AoI subject to both CAE and transmission cost constraints. Available actions at the source include: 
(i) remain idle, (ii) sample the environment and transmit a raw sample, or (iii) sample, process (which incurs  additional processing cost but offers higher delivery reliability), and transmit an estimate of the state of the environment. This setup aligns with prior work on joint AoI-CAE optimization under resource constraints \cite{js2023distortion}. The main contributions of this work are as follows:
\begin{itemize}
    \item We formulate a joint control problem for sampling, processing, and transmission in a sensing-monitoring system, where the monitored environment evolves as a two-state discrete-time Markov chain. The optimization problem minimizes average AoI, subject to bounds on CAE and the cost incurred at the source, addressing the inadequacy of AoI alone in capturing the dynamics of fast- versus slow-changing environments.
    \item We propose a stationary randomized policy (SRP) that achieves an average AoI within a bounded multiplicative gap of two from the optimal policy over the feasible set. While the approach to deriving the performance bound is similar to prior work, we make several important adaptations to account for the different definition of AoI and the inclusion of CAE. 
    \item We conduct numerical experiments to evaluate the proposed policy under various system parameters and environment dynamics. Our results show that tight cost constraints lead to high AoI and infeasibility with strict CAE bounds. As the cost bound increases, more frequent sampling reduces both AoI and CAE. When processed transmission is affordable, the system favors it, but raw transmissions are preferred when the cost gap is large. Additionally, we observe that minimizing AoI does not always minimize CAE, and we highlight how channel reliability and source transition probabilities influence these metrics. Asymmetric transitions result in lower CAE due to longer dwell times in specific states.
\end{itemize}

    This paper is organized as follows: Section \ref{sec:system_model} presents the system model and formalizes the primary problem. Section \ref{sec:solution} investigates a class of SRPs as a solution approach. Section \ref{sec:num_analysis} provides the numerical evaluation and analysis. Finally, Section \ref{sec:conclusion} summarizes the findings and concludes the paper.

\section{The System Model}\label{sec:system_model}

    Consider a source (sensor node) tasked with monitoring an environment and providing timely and accurate updates to a destination (remote monitoring facility). Time is slotted, and at the beginning of each time slot \( n \in \{1, 2, 3, \dots, N\} \), the destination selects an action for the source to execute and communicates this decision to the source. The available actions are: (i) remain idle, (ii) sample the environment and transmit a raw sample, or (iii) sample, process, and transmit an estimate of the state of the environment.

    We assume that action selection, signaling from the destination to the source, sample acquisition and processing, transmission, and reception all occur within the same time slot. The system model is illustrated in Fig.~\ref{fig:system_model_complete}.

    In the following, we present the source and channel models, the formulations for CAE and AoI evolution, and the action-dependent transmission cost model.

    \begin{figure*}[!t]
        \centering
        \includegraphics[width=0.9\textwidth]{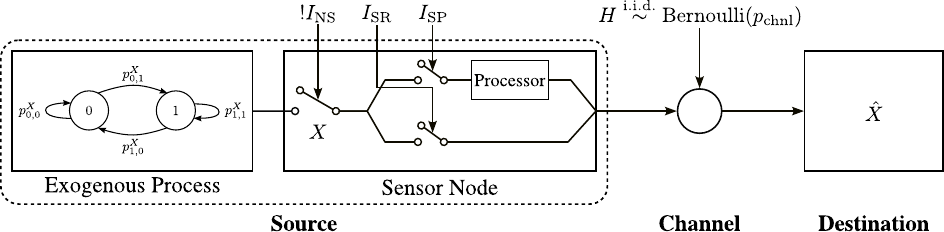}
        \caption{The source, directed by the destination, samples the stochastic process on demand and transmits data, processed or unprocessed, over a wireless channel.}
        \label{fig:system_model_complete}
    \end{figure*}

    \subsection{Source Model}
        In this work, we are interested in monitoring an environment that evolves as an exogenous two-state process modeled as a discrete-time, time-homogeneous Markov chain (DTMC), \( X(1), X(2), \ldots \). Let \( \mathcal{X} = \{0, 1\} \) be its state space, and let the probability of \( X(n) \) transitioning from State \( i \) to State \( j \) be \( p_{i, j}^X \) for any \( n \). We denote the stationary distribution of this two-state DTMC as \( \bm{\pi}_X = [\pi_0^X, \pi_1^X] \), where \( \pi_0^X = p_{1,0}^X /(p_{0,1}^X + p_{1,0}^X) \) and \( \pi_1^X = p_{0,1}^X/ (p_{0,1}^X + p_{1,0}^X) \).

    \subsection{Possible Actions at the Source}
        At each time slot, the source can take one of three possible actions: remain idle (\emph{no sample}, denoted as \( \mathrm{NS} \)), sample and transmit a \emph{raw} observation (\( \mathrm{SR} \)), or sample and \emph{process} the data before transmission (\( \mathrm{SP} \)). We define the action set as \( \mathcal{A} = \{\mathrm{NS}, \mathrm{SR}, \mathrm{SP}\} \) and introduce an indicator function for an action \( A \in \mathcal{A} \) at time slot \( n \) as follows:  
        \begin{align}  
            I_A(n) =  
            \begin{cases}  
                1, & \text{if action } A \text{ is taken during slot } n, \\  
                0, & \text{otherwise}.  
            \end{cases}  
        \end{align}  

        Each of the actions from the set \( \mathcal{A} \) has a cost\footnote{The cost associated with transmission actions is distinct from CAE. This cost, referred to as the transmission cost, arises from the operation of the transmitter and depends on the specific action taken—whether a sample is transmitted or not, and if transmitted, whether it is in raw or processed form. Thus, it may include the sampling and processing cost(s) implicitly.}\footnote{We consider the cost incurred at the  destination to be negligible in comparison to its resources.} associated with it. Let \( C(n) \) denote the real-time transmission cost function. Then, \( C(n) = \bm{c}^\top \bm{I}(n) \), where \( \bm{c} = [c_\mathrm{NS}, c_\mathrm{SR}, c_\mathrm{SP}] \) is the vector of costs associated with each of the possible actions, and \( \bm{I}(n) \) is the action vector for slot \( n \) defined as \(\bm{I}(n) = [I_\mathrm{NS}(n), I_\mathrm{SR}(n), I_\mathrm{SP}(n)]\). We consider the following constraint on the expected cost in long-term time-averaged transmission cost, denoted \( \bar{C} \), which is given by
        \[
            \bar{C} = \lim_{N \to \infty} \frac{1}{N} \sum_{n=1}^{N} \mathbb{E}[C(n)] \leq c_0,  
        \]
        where \( c_0 \) is the maximum allowed limit on \( \bar{C} \) and the expectation
        is with respect to the action vector. 
        \begin{remark}
        One can interpret \(c_\mathrm{NS}\) as the cost associated with keeping the system running, even when no sample is generated. The costs for sampling and transmitting the raw sample, or sampling, processing, and then transmitting, are captured by \(c_\mathrm{SR}\) and \(c_\mathrm{SP}\), respectively. While one could have set \(c_\mathrm{NS} = 0\) and subtracted this baseline cost from \(c_0\), the problem formulation would remain unchanged. However, we explicitly retain \(c_\mathrm{NS}\) to align with the probabilities \([p_\mathrm{NS}^\mathsf{R}, p_\mathrm{SR}^\mathsf{R}, p_\mathrm{SP}^\mathsf{R}]\), which correspond to not sampling, sampling and transmitting, and sampling, processing, and transmitting, respectively. These probabilities define the stationary randomized policy we adopt as a solution to the problem formulated at the start of the next section.
        \end{remark}

    \subsection{Channel Model}
        We consider a single-hop wireless channel that may be in either a good or bad state, where the probability of the channel being in a good state is \( p_\mathrm{chnl} \in [0,1] \). Let \( H(n) \) denote the channel state in slot \( n \), where \( H(n) \overset{\mathrm{i.i.d.}}{\sim} \text{Bernoulli}(p_\mathrm{chnl}) \). The probability of successful packet delivery, denoted as \( p^t_h \in [0,1] \), depends on the channel condition \( h \in \{0 \text{ (bad)}, 1 \text{ (good)}\} \) and the type of sample \( t \in \{r \text{ (raw)}, p \text{ (processed)}\} \).

    \subsection{Action Constraints and Probability of Successful Delivery}
        As defined before, if action vector at time \( n \) is denoted by \( \bm{I}(n) \), then any feasible decision policy must satisfy the following validity constraints: 
        \begin{align}  
            I_A(n) (I_A(n) - 1) &= 0, \quad \forall\ A \in \mathcal{A},\ n \in \mathbb{N}, \\  
            \sum_{A \in \mathcal{A}} I_A(n) &= 1, \quad \forall\ n \in \mathbb{N}.  
        \end{align}  

        Upon transmission of an update from the source, either the destination may receive the update successfully, or the delivery may fail. When a transmission is successful, the destination has perfect information or estimates about the state of the environment. When the delivery fails, the destination retains its previous information and uses it as the current estimate of the state of the environment.

        Let \( d(n) \) be a binary random variable that indicates successful packet delivery, where \( d(n) = 1 \) if the packet is successfully received and decoded at the destination and \( d(n) = 0 \) otherwise. The probability of a successful update depends on the action taken:  
        \begin{align}  
            d(n) & =   
            \begin{cases}  
                1, & \text{with probability } 0, \quad \text{if } \bm{I}(n) = [1,0,0], \\  
                1, & \text{with probability } \mu, \quad \text{if } \bm{I}(n) = [0,1,0], \\  
                1, & \text{with probability } \nu, \quad \text{if } \bm{I}(n) = [0,0,1],   
            \end{cases}  
        \end{align}  
        where \( \mu = p^r_1 p_\mathrm{chnl} + p^r_0 (1 - p_\mathrm{ chnl}) \) and \( \nu = p^p_1 p_\mathrm{chnl} + p^p_0 (1 - p_\mathrm{chnl}) \); \( \mu \) and \( \nu \) represent the success probability associated with the actions \( \mathrm{SR} \) and \( \mathrm{SP} \), respectively.

        Taking expectation with respect to the channel state randomness over \( d(n) \) conditioned on the action vector, we obtain:  
        \begin{align}  
            \mathbb{E}[d(n) \mid \bm{I}(n)] &= \mu I_\mathrm{SR}(n) + \nu I_\mathrm{SP}(n).  
        \end{align}  
        Taking expectation with respect to the randomness in the action vector and applying the law of iterated expectations, we have,
        \begin{align}  
            \psi(n) = \mathbb{E}[d(n)] &= \mu \mathbb{E}[I_\mathrm{SR}(n)] + \nu \mathbb{E}[I_\mathrm{SP}(n)].  \label{eq:psi_n}
        \end{align}  
        Note that \( \mu \) and \( \nu \) are determined by system parameters and \( \mathbb{E}[d(n)] \), which we denote as \( \psi(n) \) here on, represents the probability of a successful update, that is \( P(d(n) = 1) \), during time slot \( n \).

    \subsection{Cost of Actuation Error - A Context-Aware Distortion Measure}
        In many scenarios, including critical applications such as hazard detection, the destination performs some state-dependent action(s). It is essential that destination acquires updates from the source in timely and accurate manner (i.e., with low distortion). Decisions based on erroneous state estimates can lead to either unnecessary actuation or failure to act as and when required. Each of these outcomes is associated with a cost. For instance, actuation involves the use of physical resources, while taking no action can lead to damage or system failure. This necessitates consideration of  significance of the situation and context of the system into decision-making. 

        The CAE \cite{Fountoulakis2023COAE} quantifies the consequences and \emph{semantics}, formally capturing the cost incurred due to mismatch between the true state and its estimate. Let \( X(n) \) and \( \hat{X}(n) \) denote the true state and the estimate at time \( n \), respectively. We define the real-time\footnote{We use the term ``real-time'' to distinguish quantities measured at individual time slots from those averaged over time. This usage is conceptual and should not be interpreted in a strictly physical sense; it pertains to the timescale defined by discrete slots.} CAE denoted by \( \Delta(n) \) as:
        \[
            \Delta(n) = \Delta(X(n) = i, \hat{X}(n) = j) = \delta_{i,j}, \forall\  i, j \in \mathcal{X}, 
        \]  
        where \( \delta_{i, j} \geq 0 \) for \( i \neq j \) and \( \delta_{i, j} \leq 0 \) otherwise. The values \( \delta_{i,j} \) encode the actuation costs when the estimated state is \(j\) whereas the true state is \( i \). Notably, the CAE is non-commutative, meaning that \( \delta_{i, j} \neq \delta_{j, i} \) for \( i \neq j \) in general. This asymmetry encapsulates the directional nature of decision costs.  Consequently, the actuation error serves as a generalized, context-aware distortion measure. The long-term time-averaged expected CAE is given by  
        \[
            \bar{\Delta} = \lim_{N \to \infty} \frac{1}{N} \sum_{n=1}^{N} \mathbb{E}[\Delta(n)],
        \]
        where the expectation is with respect to the channel conditions and action vector.
        We consider the following constraint:  
        \[
            \bar{\Delta} \leq d_0,
        \]
        where \( d_0 \in \mathbb{R}_{++} \) denotes the bound on the long-term time-averaged expected CAE.

    \subsection{Age of Information}
        Let \( A(n) \) denote the AoI, at the destination, at the beginning of slot \( n \). The evolution of \( A(n) \) follows the update rule:  
        \begin{align}
            A(n + 1) = \begin{cases}
            w_0, & \text{if } \hat{X}(n) = 0, d(n) = 1,\\
            w_1, & \text{if } \hat{X}(n) = 1, d(n) = 1,\\
            A(n) + 1, & \text{otherwise}.
            \end{cases}
            \label{eq:age_of_information_definition}
        \end{align}
        Here, \( w_0, w_1 \in \mathbb{N} \) and we assume without loss of generality, \( w_1 \geq w_0 \). The definition above is adapted from \cite{js2023distortion}, where the goal was to bound the AoI for a specific state, meaning the AoI of a packet corresponding to that state was considered. We generalize this definition in \eqref{eq:age_of_information_definition}. By setting \(w_0 = 0\) and \(w_1 = 1\), we recover the original definition used in \cite{js2023distortion}. Many studies adopt a broader, non-state-discriminatory definition of AoI \cite{sinhaInfocomThroughput}. This definition can be obtained by setting \( w_0 = w_1 = 1 \), which is applicable to the problem addressed in this paper.

        Now, the long-term time-averaged expected AoI, denoted by \( \bar{A} \), is:
        \begin{align}
            \bar{A} = \lim_{N \rightarrow \infty} \frac{1}{N} \sum_{n=1}^{N} \mathbb{E}[A(n)],
        \end{align}
        where the expectation is taken with respect to the randomness in channel conditions and the action vector. 

\section{Optimization Analysis}\label{sec:solution}

    In this section we present the main optimization problem of this work, focus on the class of \emph{Stationary Randomized Policies (SRP)}, and derive the optimal policy within this class, which transforms the general problem into a simple and tractable structure, yielding a policy that is both analytically simple and practically implementable. We further establish a performance guarantee, showing that the AoI achieved under the optimal SRP is at most twice \( \mathrm{OPT}^\star \) in \eqref{eq:general_prob}. To facilitate this analysis, we first derive a closed-form expression for the real-time expected CAE, \(  \mathbb{E}[\Delta(n)] \), under an arbitrary feasible policy and then analyze the corresponding SRP.

    \subsection{Optimization Problem}
        In this work, we consider the following optimization problem:
        \begin{subequations}\label{eq:general_prob}
            \begin{align}
                \mathrm{OPT}^{\star} = \min_{\mathsf{P} \in \mathcal{P}} \quad & \lim_{N \to \infty} \frac{1}{N} \sum_{n=1}^{N} \mathbb{E}[A(n)] \label{subeq:general_prob_objective} 
                \\  
                \text{subject to} \quad & \bar{\Delta} \leq d_0,  \label{subeq:general_prob_distortion_constraint}
                \\  
                & \bar{C} \leq c_0, \label{subeq:general_prob_cost_constraint}
                \\  
                & I_{A}(n) (I_A(n) - 1) = 1, \forall\ A \in \mathcal{A},\ n \in \mathbb{N}, \label{subeq:general_problem_validity_constraint_01}
                \\  
                & \sum_{A \in \mathcal{A}} I_A(n) = 1, \forall\ n \in \mathbb{N}, \label{subeq:general_problem_validity_constraint_02}
            \end{align}
            \end{subequations}
        where \( \mathcal{P} \) denotes the class of all non-anticipative admissible policies, and \( \mathsf{P} \in \mathcal{P} \) is an admissible policy. The constraints \eqref{subeq:general_prob_distortion_constraint} and \eqref{subeq:general_prob_cost_constraint} enforce upper bounds on the long-term time-averaged expected CAE and transmission cost, respectively, while constraints \eqref{subeq:general_problem_validity_constraint_01} and \eqref{subeq:general_problem_validity_constraint_02} ensure the validity of the decision variables. The problem defined by \eqref{subeq:general_prob_objective}-\eqref{subeq:general_problem_validity_constraint_02} is referred to as the \emph{general problem}.

    \subsection{Real-Time Expected CAE}
        \begin{lemma}\label{lemma:inst_exp_dist}
            The real-time expected CAE is given by  
            \begin{align}
                \mathbb{E}[\Delta(n)] &=  \zeta + \xi \psi(n),  \label{eq:exp_distortion}
            \end{align}  
            where \( \zeta = \sum_{i, j} \delta_{i, j} \pi_i^X \pi_j^X \) and \( \xi = \pi_0^X \pi_1^X \sum_{i, j} (-1)^{i+j} \delta_{i, j} \). 
            \end{lemma}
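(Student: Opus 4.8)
The plan is to evaluate the real-time CAE directly from its definition, namely $\mathbb{E}[\Delta(n)] = \sum_{i,j\in\mathcal{X}} \delta_{i,j}\,P\!\left(X(n)=i,\hat{X}(n)=j\right)$, by conditioning on whether a successful update is delivered in slot $n$. Writing $\psi(n)=P(d(n)=1)$ as in \eqref{eq:psi_n}, I would split $\mathbb{E}[\Delta(n)] = \psi(n)\,\mathbb{E}[\Delta(n)\mid d(n)=1] + (1-\psi(n))\,\mathbb{E}[\Delta(n)\mid d(n)=0]$. The first fact to record is that the delivery indicator $d(n)$ is a function only of the i.i.d.\ channel state and the (history-based) action, so it is independent of the current true state $X(n)$; hence conditioning on $d(n)$ leaves the stationary law $X(n)\sim\bm{\pi}_X$ intact.

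First I would treat the successful-update case. When $d(n)=1$ the destination learns the true state, so $\hat{X}(n)=X(n)$ and $\Delta(n)=\delta_{X(n),X(n)}$, which under $X(n)\sim\bm{\pi}_X$ gives $\mathbb{E}[\Delta(n)\mid d(n)=1]=\sum_{i}\delta_{i,i}\pi_i^X$. A short algebraic identity then shows this equals $\zeta+\xi$: expanding the two-state definitions, the off-diagonal penalties $\delta_{0,1},\delta_{1,0}$ cancel and the diagonal terms collapse via $\pi_0^X+\pi_1^X=1$ to $\delta_{0,0}\pi_0^X+\delta_{1,1}\pi_1^X$.

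Next I would handle the failed-delivery case, where $\hat{X}(n)=\hat{X}(n-1)$ is the stale estimate carried over from the last successful reception. Here the decisive step is to argue that this retained estimate and the current state are decorrelated, with $\hat{X}(n)\sim\bm{\pi}_X$, so that $\mathbb{E}[\Delta(n)\mid d(n)=0]=\sum_{i,j}\delta_{i,j}\pi_i^X\pi_j^X=\zeta$. Substituting both conditional expectations then yields $\mathbb{E}[\Delta(n)] = \psi(n)(\zeta+\xi)+(1-\psi(n))\zeta = \zeta+\xi\psi(n)$, as claimed.

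I expect the failed-delivery case to be the main obstacle. Since the retained estimate is the true state at the most recent reception instant, in a Markov source it stays correlated with $X(n)$ through the elapsed age. A fully rigorous treatment would expand over the time $k$ since the last success and invoke the two-state $k$-step transition law $P(X(n)=i\mid X(n-k)=j)=\pi_i^X+\lambda^{k}\left(\mathbf{1}[i=j]-\pi_i^X\right)$ with $\lambda=1-p_{0,1}^X-p_{1,0}^X$; carrying out that sum shows the exact failure-case contribution is governed by $\sum_{k\ge 1}\lambda^{k}P(\tau(n)=k)$, where $\tau(n)$ is the age since the last success. The clean closed form $\zeta+\xi\psi(n)$ corresponds to keeping only the fresh-update ($\tau(n)=0$) correlation and treating every stale estimate as independent of the current state, i.e.\ neglecting these higher-order age-correlation terms. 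I would therefore state this decorrelation of the stale estimate explicitly as the modeling property on which the lemma rests, since it is precisely what reduces the age-dependent correlation to the single factor $\psi(n)$ and delivers the stated expression.
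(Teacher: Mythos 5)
Your proposal is correct and follows essentially the same route as the paper's proof: the paper likewise conditions on the success/failure event $S(n)$ versus $\bar{S}(n)$, uses the fact that a success forces $\hat{X}(n)=X(n)$, establishes by induction from the initialization $\hat{X}(0)\sim\mathrm{Bernoulli}(\pi_1^X)$ that the estimate's marginal stays $\bm{\pi}_X$, and under failure replaces the conditional law of the retained estimate given $X(n)$ by this marginal. The decorrelation step you flag as the main obstacle is precisely the step the paper takes implicitly when it writes $P(\hat{X}(n)=0 \mid X(n)=0, \bar{S}(n)) = P_{\hat{X}(n-1)}(0)$, so your identification of this as a modeling assumption on the stale estimate — rather than a consequence of the two-state Markov dynamics, whose residual correlation decays as $\lambda^{k}$ with the age $k$ — is accurate and in fact more explicit than the paper's own treatment.
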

        \begin{proof}
            See Appendix \ref{dist_expression}. 
        \end{proof}

        The expected real-time CAE \eqref{eq:exp_distortion} comprises two components: \( \zeta \) and \( \xi \psi(n) \). We make the following remarks regarding their significance.

        \begin{remark}
            The first term, \( \zeta \), represents the expected CAE under the assumption that the true state and its estimate evolve independently, with identical marginal distributions. Under this assumption, the CAE contribution is \( \delta_{i,j} \) whenever the true state is \( X = i\)  and the estimate is \( \hat{X} = j \).
        \end{remark}
        \begin{remark}
            However, the estimate is not independent of the true state—particularly in the case of a successful update, where the estimate becomes a deterministic function of the actual state. Among the four possible state-estimate pairs—\( (0, 0), (0, 1), (1, 0),\) and \( (1, 1) \)—a successful update corrects the mismatched pairs \( (0, 1) \) and \( (1, 0) \) with probability \( \psi(n) \). Specifically, if a pair \( (0, 1) \) is corrected to \( (0, 0)\), the CAE must be reduced by \( \delta_{0,1}\)  and increased by \( \delta_{0,0}\) . Similarly, correction of a \( (1, 0) \) pair to \( (1, 1) \) leads to an adjustment of \( \delta_{1,1} - \delta_{1,0} \). The second term, \( \xi \psi(n) \), captures this correction to the expected CAE arising from the dependence of the estimate on the true state. It aggregates the expected net reduction in CAE across such events, weighted by their likelihood under the update mechanism.  
        \end{remark}
        This result plays a key role---albeit subtly---by enabling a straightforward derivation of the performance bound for the SRP, later in the section. 

    \subsection{Stationary Randomized Policies (SRPs)}
        Let \(\mathcal{P}_R\) denote the set of all SRPs which choose an action with a certain probability in every slot. Concretely, a policy \(\mathsf{R} \in \mathcal{P}_R\) selects an action \( A \in \mathcal{A} \) with probability \( p_A^\mathsf{R} \) in slot \( n \). That is, the policy \(\mathsf{R}\) is fully characterized by the probability vector \(\bm{p}_\mathsf{R} = [p_\mathrm{NS}^\mathsf{R}, p_\mathrm{SR}^\mathsf{R}, p_\mathrm{SP}^\mathsf{R}]\).  

        In the rest of this section, we reformulate the primary general problem under the class of SRPs and present a reformulated convex optimization problem to obtain \( \bm{p}_\mathsf{R} \). We then demonstrate that the optimal AoI objective achieved under this optimal SRP is within a bounded multiplicative gap from that of the original general problem.

        \subsubsection{Reformulation}
            Under SRP \( \mathsf{R}\) we have, \( \mathbb{E}[I_A(n)] = p_A^\mathsf{R}, \forall\ A \in \mathcal{A}, n \in \mathbb{N} \). Substituting this into \eqref{eq:psi_n} yields:
            \begin{align}
                {\psi^\mathsf{R}(n)} = \psi^\mathsf{R} = \mu p_\mathrm{SR}^\mathsf{R} + \nu p_\mathrm{SP}^\mathsf{R}, 
            \end{align}
            and we can also show the following: 
            \begin{theorem}\label{th:aoi_expression_under_srp}
                Let \( \mathsf{R} \in \mathcal{P}_R\) be an arbitrary SRP. Under this policy, the long-term time-averaged expected AoI is given by:
                \begin{align}
                    \lim_{N \rightarrow \infty} \frac{1}{N} \sum_{n=1}^{N} \mathbb{E}[A(n)] = \left(\frac{1}{\psi^\mathsf{R}} - 1\right) + w_0 + \pi_1^X (w_1 - w_0). \label{eq:aoi_under_srp}
                \end{align}
            \end{theorem}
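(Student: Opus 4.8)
The plan is to exploit the fact that, under any fixed SRP $\mathsf{R}$, the per-slot success probability $\psi^\mathsf{R} = \mu p_\mathrm{SR}^\mathsf{R} + \nu p_\mathrm{SP}^\mathsf{R}$ is a constant, so the delivery indicators $\{d(n)\}$ form an i.i.d.\ $\mathrm{Bernoulli}(\psi^\mathsf{R})$ sequence. Because the randomized action choice and the channel state $H(n)$ are both drawn independently of the source chain $X(n)$, this Bernoulli delivery process is independent of $\{X(n)\}$. First I would record this independence explicitly, since it is precisely what decouples the reset value from the timing of the resets in the steps below. I also assume $\psi^\mathsf{R} > 0$, since otherwise no successful update ever occurs and both sides of \eqref{eq:aoi_under_srp} diverge.

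Next I would set up a renewal decomposition of the AoI. Let $m = m(n)$ be the index of the most recent slot strictly before $n$ with $d(m) = 1$, and put $G(n) = n - m \geq 1$. By the update rule \eqref{eq:age_of_information_definition}, the successful delivery at slot $m$ resets the age to $A(m+1) = w_{\hat{X}(m)}$, and since a successful delivery yields $\hat{X}(m) = X(m)$, the reset value is $W := w_{X(m)}$. As the age then increments by one in each of the $G(n)-1$ subsequent no-update slots, in the stationary regime we obtain the clean identity
\begin{align}
A(n) = W + G(n) - 1,
\end{align}
so that $\mathbb{E}[A(n)] = \mathbb{E}[W] + \mathbb{E}[G(n)] - 1$ by linearity of expectation, requiring no independence at this stage.

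It then remains to evaluate the two expectations in the stationary limit. For the gap, $G(n)$ is the backward-recurrence time of the i.i.d.\ $\mathrm{Bernoulli}(\psi^\mathsf{R})$ delivery process, hence itself geometric with parameter $\psi^\mathsf{R}$: $P(G = k) = (1-\psi^\mathsf{R})^{k-1}\psi^\mathsf{R}$ and $\mathbb{E}[G] = 1/\psi^\mathsf{R}$. For the reset value, I would argue that the source state at an update instant, $X(m)$, follows the stationary law $\bm{\pi}_X$; since $\{d(n)\}$ is independent of $\{X(n)\}$, conditioning on $d(m)=1$ does not alter the law of $X(m)$, giving $\mathbb{E}[W] = w_0 \pi_0^X + w_1 \pi_1^X = w_0 + \pi_1^X(w_1 - w_0)$. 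Substituting both into the displayed identity yields \eqref{eq:aoi_under_srp} directly. Equivalently, a renewal-reward argument over inter-update cycles gives the same result, at the cost of computing the second moment of a geometric inter-update time.

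The main obstacle is the rigorous justification of the stationary-limit claims rather than the algebra: one must show that the Ces\`aro average $\frac{1}{N}\sum_{n=1}^{N} \mathbb{E}[A(n)]$ converges and equals the stationary value $\mathbb{E}[W] + \mathbb{E}[G] - 1$, so that the initial transient (before the first successful delivery, and before the two-state chain equilibrates to $\bm{\pi}_X$) is washed out in the limit. This relies on ergodicity of the irreducible two-state chain together with the independence and renewal structure established above; the delicate point is confirming that averaging over update instants indeed reproduces the stationary distribution $\bm{\pi}_X$ for the reset state $W$, for which the independence of $\{d(n)\}$ and $\{X(n)\}$ is essential.
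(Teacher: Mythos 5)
Your proof is correct, but it takes a genuinely different route from the paper. The paper models the AoI process itself as an infinite-state DTMC on $\Omega_A = \{w_0, w_0+1,\dots\}$, writes out its full transition matrix (including the special case at state $w_1 - 1$, where the failure increment and the state-$1$ reset coincide), solves the balance equations for the stationary distribution $\pi_k^A$, and then sums the resulting geometric series to obtain the mean. You instead use a renewal-style decomposition $A(n) = W + G(n) - 1$, where $G(n)$ is the backward recurrence time of the i.i.d.\ $\mathrm{Bernoulli}(\psi^\mathsf{R})$ delivery process (geometric, mean $1/\psi^\mathsf{R}$) and $W = w_{X(m)}$ is the reset value at the last successful delivery, whose mean is $w_0 + \pi_1^X(w_1-w_0)$ by the independence of $\{d(n)\}$ and $\{X(n)\}$ together with stationarity of the source. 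Your identity is valid under the paper's update rule (a success at slot $m$ forces $\hat{X}(m)=X(m)$, so $A(m+1)=w_{X(m)}$, and the age then increments until the next success), and it automatically absorbs the $s = w_1-1$ edge case that the paper must handle separately. What your approach buys is brevity and transparency: no transition-matrix bookkeeping, no series manipulation, and the two constants in the answer ($1/\psi^\mathsf{R}-1$ and the expected reset value) appear with clear probabilistic meaning. What the paper's approach buys is the full stationary distribution of the AoI as a byproduct (useful for higher moments or distribution-sensitive metrics), and a limit argument that leans directly on the Markov chain ergodic theorem rather than on the transient wash-out and interchange-of-limits points you correctly flag as the remaining rigor burden; both proofs share the same implicit assumptions that $\psi^\mathsf{R}>0$ and that the source is in (or converges to) its stationary regime.
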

            \begin{proof}
                Refer Appendix \ref{srp_aoi} for a detailed proof. 
            \end{proof}

        We now consider the problem that one must solve in order to find the optimal SRP:
        \begin{subequations}
            \begin{align}
                \mathrm{OPT}_\mathsf{R}^{\star} = \min_{\mathsf{R} \in \mathcal{P}_R}\quad &\left(\frac{1}{\psi^\mathsf{R}} - 1\right) + w_0 + \pi_{1}^{X}\cdot (w_1 - w_0) \label{subeq:srp_age_objective}
                \\ 
                \text{subject to}\quad 
                &\zeta + \xi \psi^\mathsf{R} \leq d_0, \label{subeq:srp_distortion_constr}
                \\
                &\bm{c}^\top \bm{p}_\mathsf{R} \leq c_0, \label{subeq:srp_cost_constr} 
                \\
                & p_A^\mathsf{R} \geq 0,\quad \forall\ A \in \mathcal{A}. \label{subeq:srp_validity}
                \\
                &\sum_{A \in \mathcal{A}}p_{A}^\mathsf{R} = 1. \label{subeq:srp_normalization}
            \end{align}
            \label{eq:opt_prob_under_srp}
        \end{subequations}

        We refer to the problem \eqref{subeq:srp_age_objective}-\eqref{subeq:srp_normalization} as the \textit{SRP problem}. The constraints in \eqref{subeq:srp_distortion_constr} and \eqref{subeq:srp_cost_constr} follow from the time-invariant and stochastic nature of decision-making in SRPs. Additionally, the conditions in \eqref{subeq:srp_validity}-\eqref{subeq:srp_normalization} ensure that the SRP adheres to the validity constraints of the general problem. It can be verified that the above problem is convex. Consequently, it can be efficiently solved using standard convex optimization solvers.

        \subsubsection{Performance Bound}
            Policies in this class may be suboptimal. Therefore, we now establish a performance bound for this class of policies. To achieve this, we consider an alternative problem that serves as a lower bound for the general problem.
            \begin{subequations}
                \label{eq:lower_bound_problem}
                \begin{align}
                    L_B^\star = \min_{\mathsf{p} \in \mathcal{P}} \quad & \frac{1}{2} \left(\frac{1}{\hat{q}} - 1\right) + w_0 + \pi_1^{X} (w_1 - w_0) \label{subeq:lower_bound_objective}
                    \\[6pt]
                    \text{subject to} \quad & \bar{\Delta} \leq d_0, \label{subeq:lower_bound_dist_constr}
                    \\  
                    & \bar{C} \leq c_0, \label{subeq:lower_bound_cost_constr}
                    \\  
                    & I_{A}(n) (I_A(n) - 1) = 1, \forall\ A \in \mathcal{A},\ n \in \mathbb{N}, \label{subeq:lower_bound_validity_constr_01}
                    \\  
                    & \sum_{A \in \mathcal{A}} I_A(n) = 1, \forall\ n \in \mathbb{N}, \label{subeq:lower_bound_validity_constr_02}
                \end{align}
            \end{subequations}
            where \( \hat{q}\) denotes the long-term throughput of the source and defined as:
            \begin{align}
                \hat{q} = \lim_{N \rightarrow \infty} \frac{1}{N} \sum_{n=1}^{N} \mathbb{E}[d(n)].
                \label{eq:long_term_throughput}
            \end{align}
            The problem \eqref{subeq:lower_bound_objective}-\eqref{subeq:lower_bound_validity_constr_02} will be referred to as the \emph{lower-bound problem}. 
            \begin{lemma}\label{th:lowerbound}
                The lower bound problem \eqref{subeq:lower_bound_objective}-\eqref{subeq:lower_bound_validity_constr_02} is a lower bound of the general problem \eqref{subeq:general_prob_objective}-\eqref{subeq:general_problem_validity_constraint_02}. Mathematically, we have
                \begin{equation}
                    L_B^\star \leq \mathrm{OPT}^\star,
                \end{equation} 
                for every instance of our system model.
            \end{lemma}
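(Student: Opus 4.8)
The plan is to exploit the fact that the lower-bound problem \eqref{eq:lower_bound_problem} and the general problem \eqref{eq:general_prob} are optimized over exactly the same feasible set: constraints \eqref{subeq:lower_bound_dist_constr}--\eqref{subeq:lower_bound_validity_constr_02} coincide with \eqref{subeq:general_prob_distortion_constraint}--\eqref{subeq:general_problem_validity_constraint_02}, so only the objectives differ. Hence it suffices to establish the pointwise inequality that, for every admissible feasible policy $\mathsf{P}$, its AoI objective dominates its lower-bound objective, i.e. $\bar{A}(\mathsf{P}) \ge \tfrac{1}{2}\big(\tfrac{1}{\hat{q}(\mathsf{P})}-1\big) + w_0 + \pi_1^X(w_1-w_0)$, where $\hat{q}(\mathsf{P})$ is the throughput defined in \eqref{eq:long_term_throughput}. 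Taking the infimum of both sides over the common feasible set then yields $L_B^\star \le \mathrm{OPT}^\star$.

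To prove this pointwise bound I would argue on sample paths. Fix $\mathsf{P}$ and a horizon $N$, index the successful deliveries by $k=1,\dots,K(N)$, and let $L_k$ denote the inter-delivery gap (the number of slots for which the $k$-th delivery is the most recent one) and $W_k\in\{w_0,w_1\}$ the value to which the AoI is reset after that delivery, determined by the revealed state through \eqref{eq:age_of_information_definition}. Summing the AoI over one cycle gives $\sum_{j=0}^{L_k-1}(W_k+j) = L_k W_k + \tfrac{L_k(L_k-1)}{2}$, so, up to boundary contributions that are $o(N)$,
\[
\bar{A}(\mathsf{P}) = \lim_{N\to\infty}\frac{1}{N}\sum_{k=1}^{K(N)}\Big( L_k W_k + \frac{L_k(L_k-1)}{2}\Big),
\]
and I would lower-bound the two contributions separately. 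For the quadratic (sawtooth) part, the map $L\mapsto \tfrac{L(L-1)}{2}$ is convex, so Cauchy--Schwarz, $\sum_k L_k^2 \ge (\sum_k L_k)^2/K(N)$, together with $\sum_k L_k = N$ and $K(N)/N \to \hat{q}$, gives $\tfrac{1}{N}\sum_k \tfrac{L_k(L_k-1)}{2} \ge \tfrac{1}{2}\big(\tfrac{1}{\hat{q}}-1\big)$. This is exactly the term that the geometric computation underlying Theorem~\ref{th:aoi_expression_under_srp} renders as $\tfrac{1}{\psi^\mathsf{R}}-1$ rather than half of it, which is the source of the eventual factor of two.

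For the linear (reset-baseline) part I would show $\tfrac{1}{N}\sum_k L_k W_k \ge w_0 + \pi_1^X(w_1-w_0)$. Writing $W_k = w_0 + (w_1-w_0)\mathbf{1}\{\text{delivery }k\text{ in state }1\}$, this reduces to lower-bounding the interval-weighted fraction of deliveries that occur while $X=1$ by $\pi_1^X$. Here I would invoke that the source is exogenous and stationary while the policy is non-anticipative, so that the decision to attempt a transmission and the channel outcome are independent of the current state $X(n)$ given the past; this forces the state observed at a delivery slot to carry the stationary statistics $\bm{\pi}_X$ of the two-state chain. This is precisely the point at which the stationary occupancy of the source, and implicitly the structure uncovered in Lemma~\ref{lemma:inst_exp_dist}, enter, and it is the step that must be re-derived relative to prior work in order to accommodate the state-dependent reset values $(w_0,w_1)$ and the CAE coupling.

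I expect this reset-baseline step to be the main obstacle. Unlike the sawtooth term, it hinges on the joint law of $L_k$ and $W_k$, which a policy could in principle correlate (for instance by delivering preferentially in one state and then idling), so the argument cannot treat the reset value and the gap as independent and must instead lean on exogeneity and non-anticipativeness in a careful, sample-path-robust way. Once both bounds are in hand, I would add them to obtain the pointwise inequality, take the infimum over the shared feasible set, and conclude $L_B^\star \le \mathrm{OPT}^\star$.
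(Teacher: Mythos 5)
Your overall strategy is the same as the paper's (Appendix~\ref{lower_bound_aoi}): observe that the two problems share a feasible set, reduce the lemma to the pointwise inequality $\bar{A} \geq \tfrac{1}{2}\left(\tfrac{1}{\hat{q}}-1\right) + w_0 + \pi_1^X(w_1-w_0)$ for every admissible policy, and attack it by a cycle decomposition with Jensen/Cauchy--Schwarz on the sawtooth term. Two of your steps are looser than the paper's but repairable: you discard the post-last-delivery residual as $o(N)$, whereas the paper keeps $R$ and uses $\tfrac{R^2}{N} + \tfrac{(N-R)^2}{N\sigma(N)} \geq \tfrac{N}{1+\sigma(N)}$; and you invoke a per-sample-path limit $K(N)/N \to \hat{q}$, whereas $\hat{q}$ is defined through $\mathbb{E}[\sigma(N)]$, so the Jensen step must be taken in expectation as the paper does. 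The genuine gap is elsewhere: the reset-baseline bound $\tfrac{1}{N}\sum_k L_k W_k \geq w_0 + \pi_1^X(w_1-w_0)$, which you yourself flag as the main obstacle, is never proven. As written, the proposal does not establish the lemma.

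Your suspicion about that step is, however, exactly on target, and it cannot be closed by appealing to exogeneity and non-anticipativity alone. Those properties give the marginal $\mathbb{E}\left[\mathbf{1}\{W_k = w_1\}\right] = \pi_1^X$, but the gap $L_k$ is chosen \emph{after} the delivery reveals the state to the controller, so a legal policy can correlate $L_k$ with $W_k$ --- precisely the "deliver, then idle depending on what was seen" behavior you describe. The paper's own proof silently assumes this away: step (c) in Appendix~\ref{lower_bound_aoi} asserts independence of $s(m-1)$ and $I(m)$, which fails for such adaptive policies. In fact the pointwise inequality, and the lemma itself, can fail when $w_1 > w_0$: take $p_{0,1}^X = p_{1,0}^X = \tfrac{1}{2}$ (so $X(n)$ is i.i.d.\ and $\pi_1^X = \tfrac{1}{2}$), parameters with $\nu = 1$, slack constraints, $w_0 = 1$, $w_1 = L^2$, and the policy that transmits in the very next slot after a $w_1$-reset but idles $L-1$ slots after a $w_0$-reset. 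Every cycle has length $1$ or $L$ with equal probability, so $\hat{q} = \tfrac{2}{L+1}$ and $\bar{A} = \tfrac{L^2}{L+1} + \tfrac{L}{2} < \tfrac{3L}{2}$, while every policy's lower-bound objective is at least $w_0 + \pi_1^X(w_1-w_0) = \tfrac{L^2+1}{2}$; hence $\mathrm{OPT}^\star < L_B^\star$ for $L \geq 3$. The inequality you are asked to prove is therefore valid only under additional restrictions --- e.g.\ $w_0 = w_1$ (the non-state-discriminatory AoI used in the paper's numerics, which recovers \cite{sinhaInfocomThroughput}), or a policy class whose actions are independent of the revealed states, as with SRPs. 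So you correctly located the crux of the argument; what you could not do, no sample-path cleverness will do in the stated generality, because the statement needs weakening there, and the paper's proof glosses over precisely this point.
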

            \begin{proof}
                Following an approach similar to that in \cite{sinhaInfocomThroughput}, we establish that  
                \[
                \lim_{N \rightarrow \infty} \frac{1}{N} \sum_{n=1}^{N}\mathbb{E}[A(n)] \geq \frac{1}{2} \left(\frac{1}{\hat{q}} - 1\right) + w_0 + \pi_1^{X} (w_1 - w_0).
                \]
                Due to the structural differences in the AoI evolution model, we employ a modified \emph{partially-expected sample path} approach rather than the single sample path approach used in \cite{sinhaInfocomThroughput}. A rigorous derivation of this lower bound on the AoI is presented in Appendix \ref{lower_bound_aoi}.    
                Substituting this lower bound into the objective function shows that the objective value of the original problem in \eqref{subeq:general_prob_objective}–\eqref{subeq:general_problem_validity_constraint_02} is lower bounded by that of the problem in \eqref{subeq:lower_bound_objective}–\eqref{subeq:lower_bound_validity_constr_02}.
            \end{proof}

            Taking advantage of the similarity between the objective functions in \eqref{subeq:srp_age_objective} and \eqref{subeq:lower_bound_objective}, we derive a performance bound, which is formally stated in the following theorem.

            \begin{theorem}{Optimality Bound on SRP class solutions.}{}
                For every instance of our system model, we have
                \[
                    \mathrm{OPT}_\mathsf{R}^\star < 2 \mathrm{OPT}^\star.
                \]
            \end{theorem}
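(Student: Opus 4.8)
The plan is to reduce both the SRP problem \eqref{eq:opt_prob_under_srp} and the lower-bound problem \eqref{eq:lower_bound_problem} to a single scalar optimization over the achievable success probability, and then compare the two optimal values directly. Observe first that both objectives \eqref{subeq:srp_age_objective} and \eqref{subeq:lower_bound_objective} are strictly decreasing in their respective throughput variable ($\psi^\mathsf{R}$ and $\hat{q}$) and share the identical additive constant $\gamma := w_0 + \pi_1^X (w_1 - w_0)$. Hence each problem is equivalent to \emph{maximizing} its throughput variable over the corresponding feasible set, so I would write $\mathrm{OPT}_\mathsf{R}^\star = (1/q_R^\star - 1) + \gamma$ and $L_B^\star = \tfrac{1}{2}(1/q_L^\star - 1) + \gamma$, where $q_R^\star$ and $q_L^\star$ denote the maximal feasible values of $\psi^\mathsf{R}$ and $\hat{q}$, respectively.

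The crux is to show that these two maximal throughputs coincide, i.e. $q_R^\star = q_L^\star =: q^\star$. The key enabler is Lemma \ref{lemma:inst_exp_dist}: averaging $\mathbb{E}[\Delta(n)] = \zeta + \xi \psi(n)$ over the horizon yields $\bar{\Delta} = \zeta + \xi \hat{q}$ for \emph{every} admissible policy, so the CAE constraint \eqref{subeq:lower_bound_dist_constr} collapses to the purely scalar condition $\zeta + \xi \hat{q} \le d_0$, which is exactly \eqref{subeq:srp_distortion_constr} with $\psi^\mathsf{R}$ replaced by $\hat{q}$. For the inclusion $q_R^\star \le q_L^\star$, I note that every SRP is itself an admissible policy for which $\hat{q} = \psi^\mathsf{R}$, $\bar{C} = \bm{c}^\top \bm{p}_\mathsf{R}$, and (via Lemma \ref{lemma:inst_exp_dist}) $\bar{\Delta} = \zeta + \xi \psi^\mathsf{R}$, so any SRP-feasible throughput is lower-bound feasible. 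For the reverse inclusion, given any feasible policy I would form its time-averaged action frequencies $\bar{p}_A = \lim_{N} \tfrac{1}{N} \sum_{n} \mathbb{E}[I_A(n)]$, passing to a convergent subsequence in the compact simplex if the Cesàro limit fails to exist. The SRP with $\bm{p}_\mathsf{R} = \bar{\bm{p}}$ then reproduces exactly the same throughput $\hat{q} = \mu \bar{p}_\mathrm{SR} + \nu \bar{p}_\mathrm{SP}$, the same time-averaged cost $\bm{c}^\top \bar{\bm{p}} = \bar{C} \le c_0$, and the same $\bar{\Delta} = \zeta + \xi \hat{q} \le d_0$; hence it is SRP-feasible with identical throughput, giving $q_L^\star \le q_R^\star$. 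Together these establish $q_R^\star = q_L^\star = q^\star$.

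With the throughputs identified, the bound follows by elementary algebra. Setting $a := 1/q^\star - 1 \ge 0$ gives $\mathrm{OPT}_\mathsf{R}^\star = a + \gamma$ and $L_B^\star = a/2 + \gamma$, whence $\mathrm{OPT}_\mathsf{R}^\star = 2 L_B^\star - \gamma$. Invoking Lemma \ref{th:lowerbound} ($L_B^\star \le \mathrm{OPT}^\star$) then yields
\[
    \mathrm{OPT}_\mathsf{R}^\star = 2 L_B^\star - \gamma \le 2\,\mathrm{OPT}^\star - \gamma,
\]
and since $\gamma = w_0 + \pi_1^X (w_1 - w_0) > 0$ in every case of interest (e.g. $w_0 = w_1 = 1$, or $w_0 = 0$ with $\pi_1^X > 0$), the subtracted term renders the inequality strict, i.e. $\mathrm{OPT}_\mathsf{R}^\star < 2\,\mathrm{OPT}^\star$.

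The step I expect to be the main obstacle is establishing equality of the optimal throughputs, specifically the achievability direction $q_L^\star \le q_R^\star$: one must justify that the long-run action frequencies of an arbitrary, possibly history-dependent and non-stationary, admissible policy are well defined (or defined along a subsequence) and that the induced SRP inherits feasibility in all three time-averaged quantities simultaneously. Lemma \ref{lemma:inst_exp_dist} is precisely what makes the CAE constraint reduce to a scalar condition on $\hat{q}$; without this linearization the CAE feasible sets of the two problems would not obviously match, and the clean reduction to a common $q^\star$ would break down.
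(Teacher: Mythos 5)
Your proof is correct and takes essentially the same route as the paper: the paper likewise converts a lower-bound-feasible policy into an SRP via its time-averaged action frequencies, uses the linearity from Lemma~\ref{lemma:inst_exp_dist} to transfer the CAE and cost constraints and to match throughput, and obtains strictness from the halved additive constant $w_0 + \pi_1^X(w_1 - w_0) > 0$. Your repackaging of both problems as scalar throughput maximizations (proving both inclusions $q_R^\star = q_L^\star$, of which only $q_L^\star \le q_R^\star$ is actually needed) is a tidier presentation of the paper's chain $L_B^\star \le \mathrm{OPT}^\star \le \mathrm{OPT}_\mathsf{R}^\star \le \mathrm{OPT}_{\mathsf{R}_0} < 2L_B^\star$, not a different argument.
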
 
            \begin{proof}
                Let \( \mathsf{P}_{L_B} \) denote the policy that solves the problem of the lower bound. Let \( \bar{\Delta}_{\mathsf{P}_{L_B}} \text{ and } \bar{C}_{\mathsf{P}_{L_B}} \) be the long-term time average of the expected CAE and cost of actions associated with this policy. Construct a stationary randomized policy \( \mathsf{R}_0 \in \mathcal{P}_R \) such that,
                \begin{subequations}
                    \begin{align}
                        p_\mathrm{NS}^{\mathsf{R}_0} &= \lim_{N \rightarrow \infty} \frac{1}{N} \sum_{n=1}^{N} \mathbb{E}[I_\mathrm{NS}(n)]_{\mathsf{P}_{L_B}},  \\
                        p_\mathrm{SR}^{\mathsf{R}_0} &= \lim_{N \rightarrow \infty} \frac{1}{N} \sum_{n=1}^{N} \mathbb{E}[I_\mathrm{SR}(n)]_{\mathsf{P}_{L_B}},  
                    \end{align}
                    and 
                    \begin{align}
                        p_\mathrm{SP}^{\mathsf{R}_0} &= \lim_{N \rightarrow \infty} \frac{1}{N} \sum_{n=1}^{N} \mathbb{E}[I_\mathrm{SP}(n)]_{\mathsf{P}_{L_B}}. 
                    \end{align}
                \end{subequations}
                By exploiting the linearity of expectation, limits and summation, it is a  straightforward\footnote{In this step, we exploit the the closed-form expression for \( \mathbb{E}[\Delta(n)]\)  derived in Lemma~\ref{lemma:inst_exp_dist}.} exercise to show that \( \bar{\Delta}^{\mathsf{P}_{L_B}} = \bar{\Delta}^{\mathsf{R}_0}\)  and \( \bar{C}^{\mathsf{P}_{L_B}} = \bar{C}^{\mathsf{R}_0}\) . Thus, \( \mathsf{R}_0\)  satisfies all constraints if \( \mathsf{P}_{L_B}\)  is feasible. We conclude that \( \mathsf{R}_0\)  is feasible if \( \mathsf{P}_{L_B}\)  is feasible. Furthermore, by using the linearity of expectation, limits, and summation again, we can show that \( \hat{q}^{\mathsf{P}_{L_B}} = \psi^{\mathsf{R}_0}\) .

                Comparing the objectives,
                \begin{align}
                    \frac{\mathrm{OPT}_{\mathsf{R}_0}}{2} &= \frac{1}{2}\left(\frac{1}{\psi^{\mathsf{R}_0}} - 1\right) + \frac{w_0 + \pi_1^X (w_1 - w_0)}{2} \\
                    &< \frac{1}{2}\left(\frac{1}{\psi^{\mathsf{R}_0}} - 1\right) + w_0 + \pi_1^X (w_1 - w_0) \\
                    &= \frac{1}{2}\left(\frac{1}{\hat{q}^{\mathsf{P}_{L_B}}} - 1\right) + w_0 + \pi_1^X (w_1 - w_0) \\
                    &= L_B^\star.
                \end{align}
                Thus, \( \mathrm{OPT}_{\mathsf{R}_0} < 2 L_B^\star \). Recall that \( L_B^\star \leq \mathrm{OPT}^\star \leq \mathrm{OPT}^\star_\mathsf{R} \leq \mathrm{OPT}_{\mathsf{R}_0} \). Hence, 
                \begin{align}
                    \frac{\mathrm{OPT}^\star_\mathsf{R}}{\mathrm{OPT}^\star} \leq \frac{\mathrm{OPT_{\mathsf{R}_0}}}{\mathrm{OPT}^\star} \leq \frac{\mathrm{OPT}_{\mathsf{R}_0}}{L_B^\star} < 2.
                \end{align}
            \end{proof}
            Hence, \( {\mathrm{OPT}^\star_{\mathsf{R}}}/{\mathrm{OPT}^\star} < 2 \). Referring to the definition of optimality ratio given in \cite{sinhaInfocomThroughput}, we conclude that the SRP class of policies is \emph{2-optimal}. The last step is adopted from \cite{sinhaInfocomThroughput}. 

\section{Numerical Analysis}\label{sec:num_analysis}

    In this section, we numerically evaluate the impact of various system parameters on AoI and CAE under the SRP, including an assessment of problem feasibility. We further examine how these parameters influence the optimal action selection probabilities. Finally, we study the trade-off between AoI and CAE under the SRP. Unless mentioned otherwise, the model parameters were held constant as listed in \autoref{tab:parameters_numerical_analysis}.

    \begin{table}[t]
    \centering
    \renewcommand{\arraystretch}{1.2} 
    \caption{Parameter values used for numerical analysis.}
    \begin{tabular}{ll}
        \toprule
        \textbf{Parameter} & \textbf{Value} \\
        \midrule
        \( p_{0,1}^X, p_{1,0}^X \) & \( 0.35, 0.75 \) \\
        \( p_\mathrm{chnl} \) & \( 0.6 \) \\
        \(p^p_1, p^p_0, p^r_1, p^r_0\) & \( 0.9, 0.6, 0.75, 0.35 \) \\
        \( \delta_{0,0}, \delta_{0,1}, \delta_{1,0}, \delta_{1,1} \) & \(-0.25, 1, 1, -0.25 \) \\
        \( c_0, d_0 \) & \( 0.63, 0.2 \) \\
        \(c_\mathrm{NS}, c_\mathrm{SR}, c_\mathrm{SP}\) & \(0.1, 0.7, 1.2\) \\
        \(w_0, w_1\) & \(1, 1\) \\
        \bottomrule
    \end{tabular}
    \label{tab:parameters_numerical_analysis}
    \end{table} 

    \subsection{Feasibility Analysis}

        \begin{figure*}[!t]
            \centering
            \subfloat[]{%
                \includegraphics[width=\columnwidth]{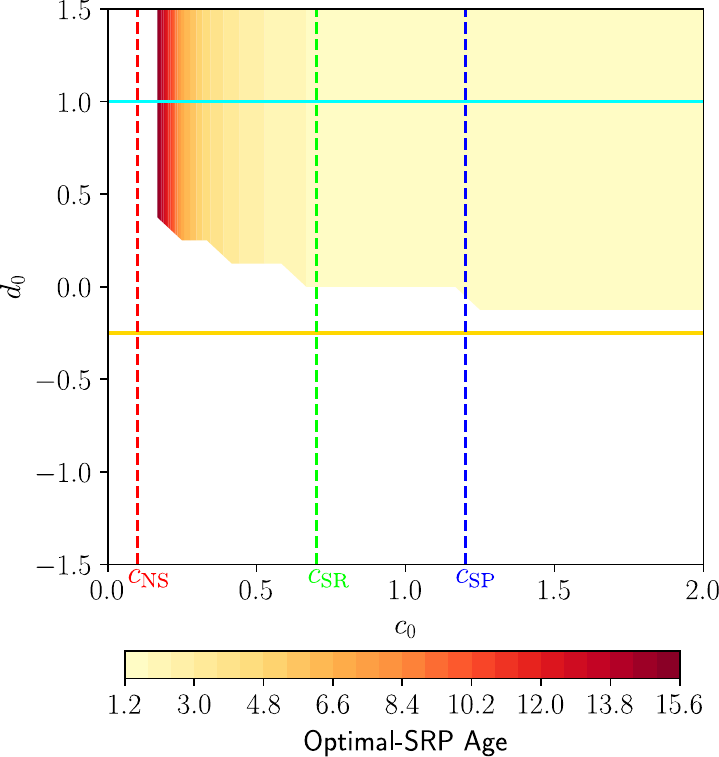}%
                \label{subfig:feas_age_plot}
            }
            \hfill
            \subfloat[]{%
                \includegraphics[width=\columnwidth]{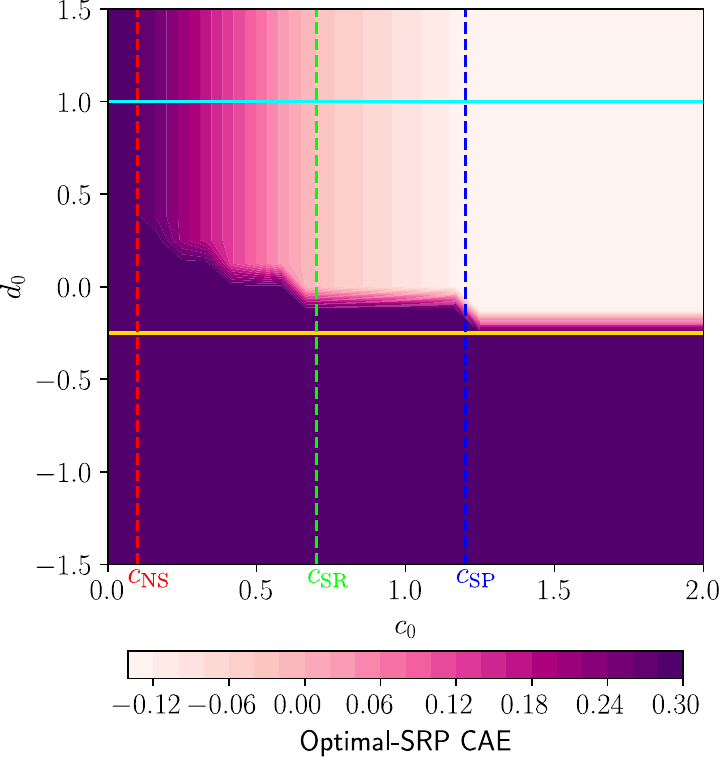}%
                \label{subfig:feas_dist_plot}
            }
            \caption{Variation of long-term averaged metrics under the optimal SRP with respect to constraint upper bounds: (\ref{subfig:feas_age_plot}) AoI, and (\ref{subfig:feas_dist_plot}) information CAE.}
            \label{fig:feasibility_01}
        \end{figure*}

        We study the set of constraint bounds \((c_0, d_0)\) for which the SRP optimization problem admits a solution, given a fixed set of parameters. Specifically, we vary \(c_0\) over \([0.0, 2.0]\) and \(d_0\) over \([-1.5, 1.5]\), while holding other parameters constant. The resulting optimal values of AoI and CAE are shown in Fig.~\ref{subfig:feas_age_plot} and Fig.~\ref{subfig:feas_dist_plot}, respectively. Ideally, the feasible region for the AoI should be convex. However, due to the discretization of \(c_0\) and \(d_0\) values, the lower boundary of the region exhibits fluctuations.

        For explainability, we partition each plot into three vertical segments—Regions A, B, and C—based on cost thresholds: Region A spans \(c_0 \in [c_\mathrm{NS}, c_\mathrm{SR})\), Region B spans \([c_\mathrm{SR}, c_\mathrm{SP})\), and Region C corresponds to \( c_0 \geq c_\mathrm{SP} \).

        In Region A, where \( c_0 \) is close to \( c_\mathrm{NS} \), the tight cost constraint limits sampling frequency, resulting in high AoI and CAE. Here, feasibility is primarily governed by the CAE constraint. For small \( d_0 \) values, the problem becomes infeasible as the system cannot meet strict CAE requirements under a limited budget. Increasing \( d_0 \) relaxes this constraint, restoring feasibility. However, AoI remains high due to persistent sampling limitations. As \( c_0 \) increases beyond \( c_\mathrm{SR} \), the system can sample more frequently, perhaps first enabling raw data transmission (Region B), and eventually processed data (Region C) once \( c_0 \ge c_\mathrm{SP} \). The latter benefits from higher delivery success due to reduced packet sizes.

        Notably, in the AoI plot, most contours are nearly horizontal—i.e., AoI remains largely unaffected by changes in \( d_0 \) for a fixed \(c_0\). This confirms that AoI is primarily influenced by the cost bound under the current setting. The CAE bound \(d_0\) has a limited impact on AoI, except near the infeasibility boundary where overly strict constraints render the problem unsolvable. Both the plots show a sharp transition in feasibility as \(d_0\) increases from negative values toward zero. This indicates that even slight relaxations in the constraints can significantly expand the feasible region, particularly under low-cost budgets.

    \subsection{Impact of Transmission Costs}

        \begin{figure*}[!t]
            \centering
            \subfloat[]{%
                \includegraphics[width=\columnwidth]{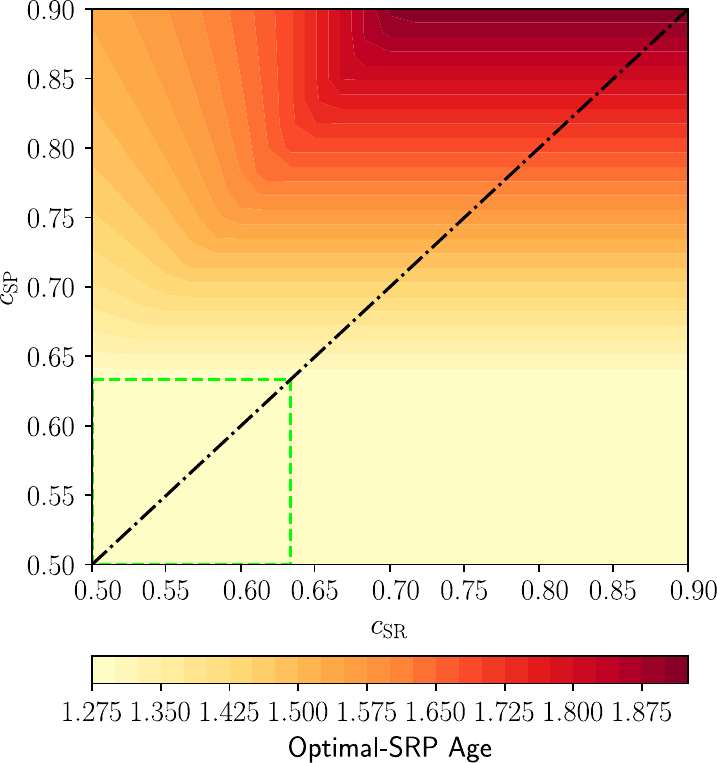}%
                \label{subfig:optimal_srp_age}
            }
            \hfill
            \subfloat[]{%
                \includegraphics[width=\columnwidth]{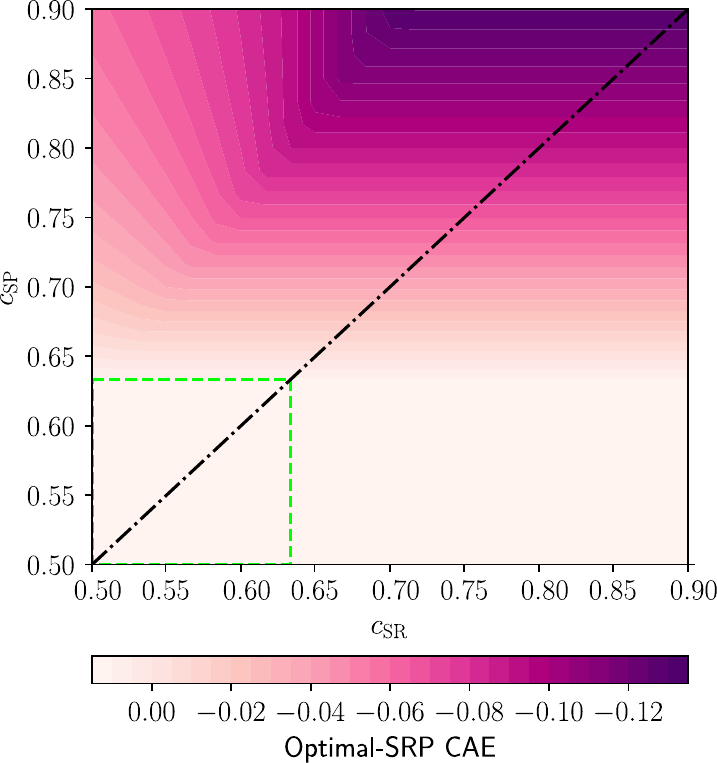}%
                \label{subfig:obs_dist_plot}
            }
            \caption{Impact of varying transmission costs on the observed long-term time-averaged metrics. The plot (\ref{subfig:optimal_srp_age}) illustrates the objective value, representing the long-term time-averaged AoI under the optimal SRP policy. The plot (\ref{subfig:obs_dist_plot}) shows the corresponding constraint value, depicting the CAE under the same policy.}
            \label{fig:metrics_vs_costs}
        \end{figure*}

        \begin{figure*}[!t]
            \centering
            \subfloat[]{%
                \includegraphics[width=0.32\textwidth]{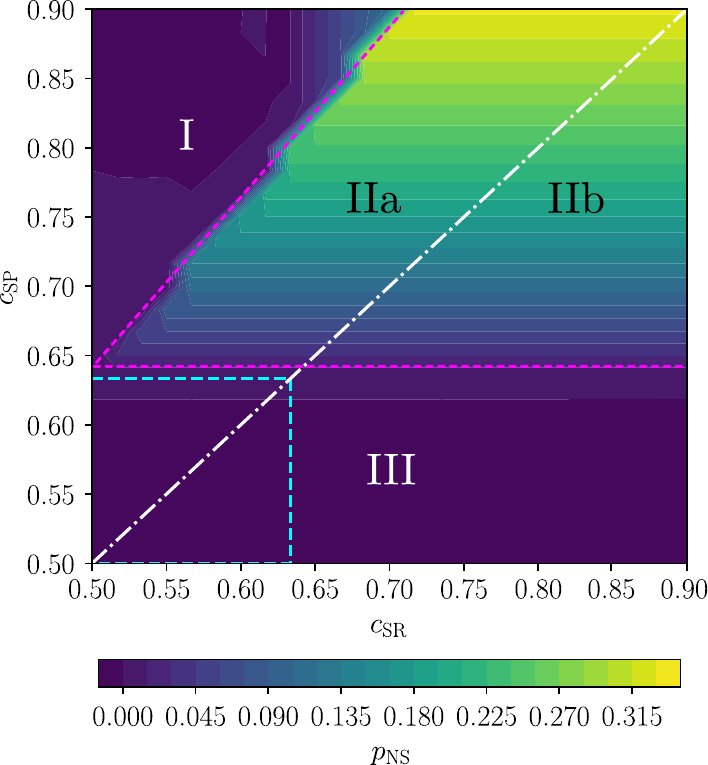}%
                \label{subfig:p_NS}
            }
            \hfill
            \subfloat[]{%
                \includegraphics[width=0.32\textwidth]{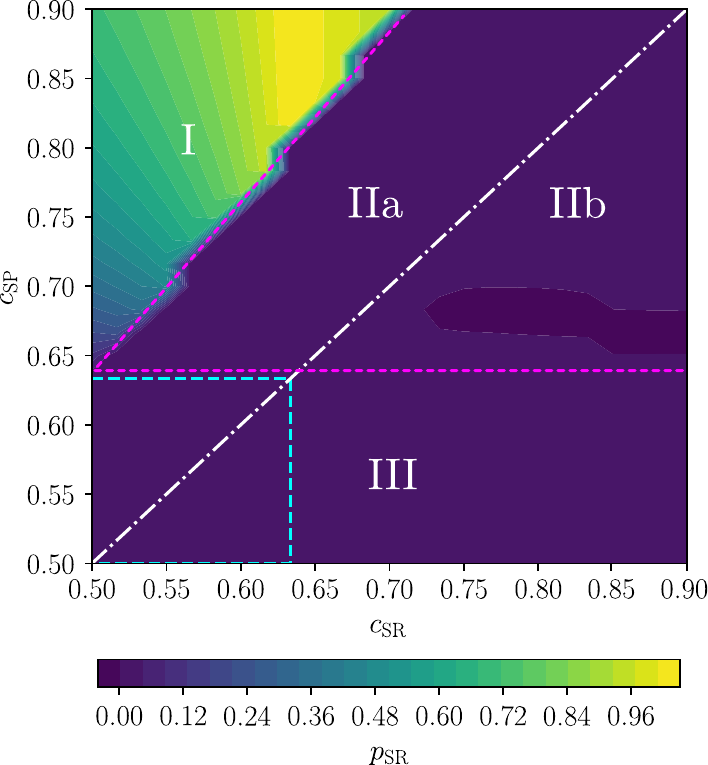}%
                \label{subfig:p_SR}
            }
            \hfill
            \subfloat[]{%
                \includegraphics[width=0.32\textwidth]{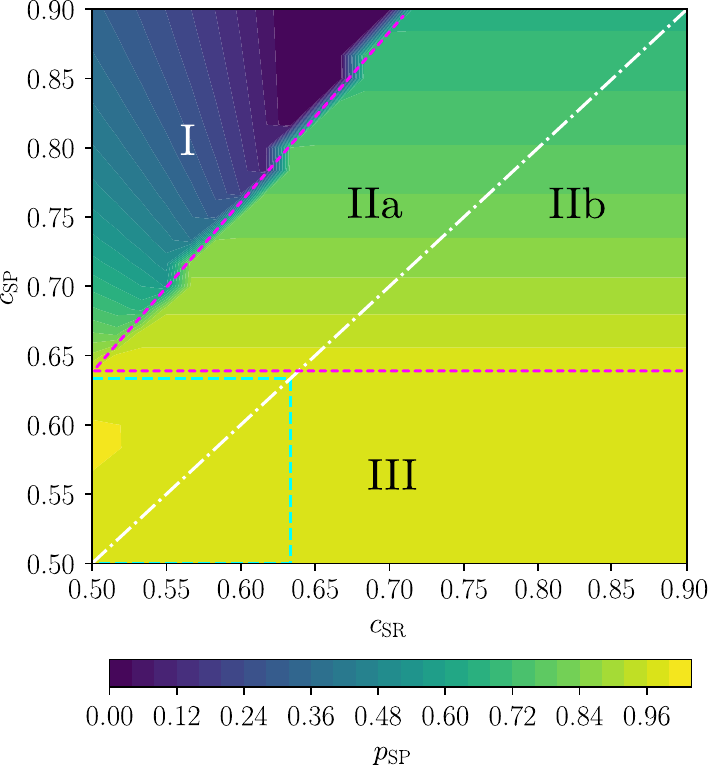}%
                \label{subfig:p_SP}
            }
            \caption{Effect of sampling costs on the optimal SRP policy. The optimal policy is characterized by the probabilities of different actions which are plotted as: (\ref{subfig:p_NS}) the probability of not sampling, (\ref{subfig:p_SR}) the probability of sampling and transmitting raw data, and (\ref{subfig:p_SP}) the probability of sampling and transmitting processed data (the state of the process).}
            \label{fig:probs_cost_plot}
        \end{figure*}

        We now analyze how variations in transmission costs affect the system's long-term performance and optimal behavior. Specifically, we fix \(c_\mathrm{NS} = 0.1\) and vary \(c_\mathrm{SR} \) and \( c_\mathrm{SP} \) over the range \( [0.5, 0.9] \), keeping the total cost constraint \( c_0 \) and other parameters fixed.

        Fig.~\ref{subfig:optimal_srp_age} and \ref{subfig:obs_dist_plot} present the observed long-term time-averaged AoI and CAE, respectively, under the optimal SRP. The corresponding policy consisting of the action probabilities \( p_\mathrm{NS}, p_\mathrm{SR} \) and \( p_\mathrm{SP} \) are shown in Fig.~\ref{subfig:p_NS}, \ref{subfig:p_SR}, and \ref{subfig:p_SP}, respectively. For ease of interpretation, all plots are segmented into three distinct and major regions (I, II, and III), each reflecting qualitatively different system behavior. 

        Each plot includes the following annotations: (i) a green/cyan dashed box indicating the region bounded by \(c_0\) on each axis, (ii) a black/white dash-dotted line corresponding to \( c_\mathrm{SR} = c_\mathrm{SP} \), and (iii) pink dotted lines demarcating the boundaries between the three identified regions. 

        In {Region III}, both AoI and CAE are significantly lower than in other regions. As illustrated in Fig.~\ref{fig:probs_cost_plot}, this can be attributed to the feasibility of executing the action \( \mathrm{SP}\) in almost every time slot, given that \( c_\mathrm{SP} \leq c_0\) . Additionally, the higher success probability associated with processed packets (\( \nu \) ) reduces both AoI and CAE. Consequently, the optimal SRP in this region predominantly favors the \( \mathrm{SP}\)  action.

        {Region II} can be further subdivided into subregions IIa (above the line \( c_\mathrm{SR} = c_\mathrm{SP}\) ) and IIb (below the line \( c_\mathrm{SR} = c_\mathrm{SP}\) ). In {Region IIb}, where \( c_\mathrm{SR} \geq c_\mathrm{SP}\) , processing before transmission is more cost-effective and offers higher success probability. However, since \( c_\mathrm{SP} > c_0\) , the cost constraint forces a reduction in sampling frequency, thereby increasing the probability of the no-sampling action (\( \mathrm{NS}\) ). This results in increased AoI and CAE. In {Region IIa}, although \( c_\mathrm{SP} > c_\mathrm{SR}\)  suggests a preference for the \( \mathrm{SR}\)  action, Fig.~\ref{fig:probs_cost_plot} shows a continued dominance of the \( \mathrm{SP}\)  action. This apparent contradiction highlights the role of differing success probabilities \( \mu\)  (for raw data) and \( \nu\)  (for processed data). A substantially higher \( \nu\)  can outweigh the cost advantage of the \( \mathrm{SR}\)  action, especially if \( \mu\)  is relatively low. As a result, the optimal SRP still favors \( \mathrm{SP}\) , alongside a relatively high \( p_\mathrm{NS}\)  compared to Region IIb.

        In {Region I}, where the cost asymmetry between raw and processed data transmission is most pronounced, the optimal SRP predominantly employs the \( \mathrm{SR}\)  action, occasionally interspersed with \( \mathrm{SP}\)  (close to \(c_0\)) and very rarely chooses \( \mathrm{NS}\) . This behavior is expected, as the system must compensate for the high cost of \( \mathrm{SP}\)  while still meeting the constraint. Two strategies are available: (i) transmit raw samples frequently and use processed data transmissions sparingly, or (ii) reduce sampling frequency. Since \( c_\mathrm{SR} \leq c_0\)  in this region, the former strategy becomes feasible. As a result, the optimal SRP in Region I delivers intermediate performance in terms of both AoI and CAE.

    \subsection{Tradeoffs Between AoI and CAE}

        \begin{figure}[!t] 
            \centering 
            \includegraphics[width=\columnwidth]{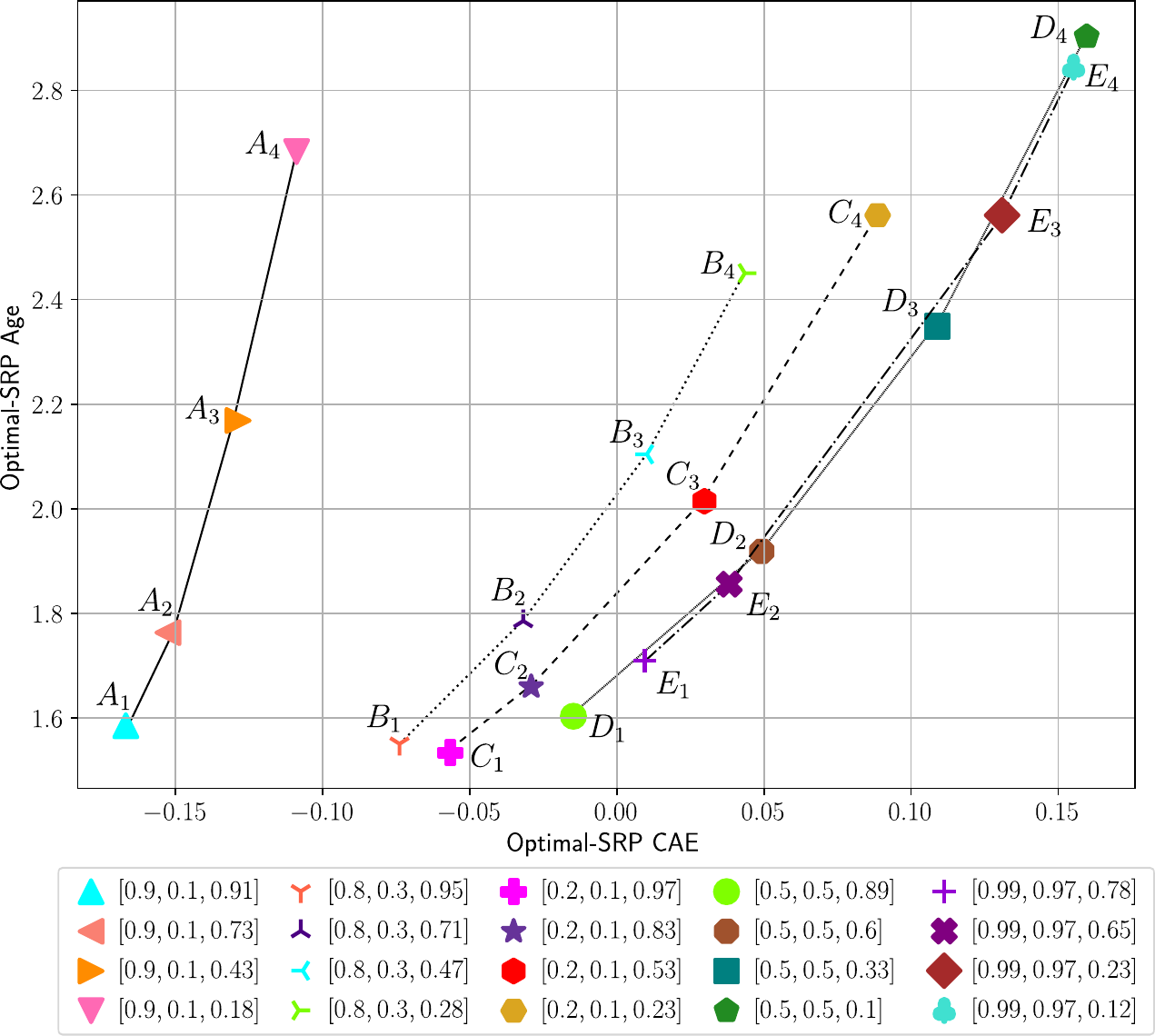} 
            \caption{The average Age of Information (AoI) and average CAE observed under the optimal SRP policy across various instances of the system model. Each legend entry denotes the tuple \( (p_{0,1}^X, p_{1,0}^h, p_{\mathrm{chnl}})\), corresponding to the parameters governing the exogenous process, hidden process, and channel reliability, respectively.} \label{fig:age_vs_dist} 
        \end{figure}

        Fig.~\ref{fig:age_vs_dist} illustrates the tradeoff between the average AoI and average CAE achieved by the optimal SRP under various system configurations. This analysis aims to understand how variations in the source parameters (transition probabilities) and channel reliability influence these performance metrics. Although Fig.~\ref{fig:metrics_vs_costs} suggests that AoI and CAE often increase or decrease together---potentially implying that minimizing one also minimizes the other---this is not always the case, as highlighted in Fig.~\ref{fig:age_vs_dist}.

        Consider the trajectories \( A_1 \rightarrow A_4\) , \( B_1 \rightarrow B_4\) , and \( C_1 \rightarrow C_4\)  in Fig.~\ref{fig:age_vs_dist}, which correspond to decreasing channel success probability \( p_{\mathrm{chnl}}\) . When the source transition probabilities are fixed (i.e., fixed \( (p_{0,1}^X, p_{1,0}^X)\) ), both AoI and CAE increase monotonically as channel reliability degrades. This is intuitive---fewer successful transmissions lead to staler information and less accurate estimates.

        However, consider points \( A_2, B_2 \), and \( E_2\), which have nearly identical \( p_{\mathrm{chnl}}\) . These points exhibit similar AoI but significantly different CAE levels. Since the channel reliability is constant, the variation in CAE arises from differences in the source transition probabilities. Specifically, when the source transitions are asymmetric (e.g., one high and one low), the system tends to dwell longer in the state with the lower transition probability. Because the estimate is typically held constant between updates, the likelihood of matching the actual state is higher, resulting in lower CAE. In contrast, when both transition probabilities are high (e.g., \( (p_{0,1}^X, p_{1,0}^X) = (0.5, 0.5)\)  or \( (0.9, 0.9)\) ), the state changes frequently, increasing the risk of mismatch and hence CAE.

        In summary, this figure emphasizes that AoI and CAE capture fundamentally different dimensions of system performance. Optimizing one may not guarantee optimality in the other. For systems where both information freshness and accuracy matter, sampling and transmission strategies must be designed with joint consideration of both metrics.
        
\section{Conclusions}\label{sec:conclusion}
    We studied a sensing-monitoring system in which decisions on sampling and transmission—whether raw or processed—are made under transmission cost and CAE constraints. Processed samples offer a higher success probability for update delivery but incur a greater cost. We proposed a stationary randomized policy that selects actions with fixed probabilities at each time instant to minimize average AoI while satisfying both constraints. We analytically showed that the average AoI achieved under this policy is within a factor of two of the optimal value over all feasible policies.

    Our results showed that AoI is primarily influenced by the cost bound, while CAE depends on the temporal dynamics of the underlying Markov process for a given average AoI. The feasibility analysis revealed that under tight cost constraints, the system often becomes infeasible when CAE bounds are strict, and AoI remains high due to limited sampling opportunities. As the cost bound increases, the system supports more frequent sampling and enables raw/processed transmissions, thereby reducing both AoI and CAE. We also analyzed how changes in transmission costs affect long-term behavior. When processed transmission becomes affordable, the system prefers it due to its higher success probability, resulting in low AoI and CAE. In contrast, when the cost gap between raw and processed transmissions is large, the system favors raw transmissions, yielding moderate improvements in AoI and CAE. Finally, we analyzed the trade-off between AoI and CAE, which highlighted that minimizing AoI does not necessarily minimize CAE. Our results demonstrated that both channel reliability and source transition dynamics significantly influence these metrics. Specifically, we observed that asymmetric source transitions lead to lower CAE due to longer dwell times in certain states, which increases the likelihood that copying the previous state as the estimate when no transmission occurs yields a correct prediction.

    In this work, we focused on a single-sensor setup as a foundational case. Future directions include extending the framework to multi-state source models, accommodating multiple sources with contention, developing adaptive policies tailored to dynamic environments, and deriving tighter analytical performance bounds to deepen theoretical understanding.

\begin{appendices}

\section{Proof of Lemma~\ref{lemma:inst_exp_dist}}\label{dist_expression}

    Let \( S(n) \) denote the event of a successful update occurring in time slot \( n \), i.e., \( d(n) = 1 \) and \( \bar{S}(n) \) be the complementary event.
    We begin by deriving the marginal distribution of the estimated state, specifically, \(
    P_{\hat{X}(n)}(0) = P(\hat{X}(n) = 0)\) and \( P_{\hat{X}(n)}(1) = P(\hat{X}(n) = 1)\).  
    Using the law of total probability, we express \( P_{\hat{X}(n)}(0) \) as  
    \begin{multline}
        \label{eq:est_recursive_n}
        P_{\hat{X}(n)}(0) = P(\hat{X}(n) = 0 \mid S(n)) P(S(n))\\ 
        + P(\hat{X}(n) = 0 \mid \bar{S}(n)) P(\bar{S}(n)).
    \end{multline}  

    A successful update leading to \( \hat{X}(n) = 0 \) is possible only if the true state was \( X(n) = 0 \), implying that \( P(\hat{X}(n) = 0 \mid S(n)) = P(X(n) = 0) = \pi_0^X\) . On the other hand, in the absence of an update (or if the update attempt fails), the estimated state remains unchanged, meaning that \( P(\hat{X}(n) = 0 \mid \bar{S}(n)) = P(\hat{X}({n-1}) = 0)\) . Given that the probabilities of a successful and unsuccessful update are \( P(S(n)) = \psi(n) \) and \( P(\bar{S}(n)) = 1 - \psi(n) \), respectively, substitution in \eqref{eq:est_recursive_n} yields the recursive expression:  
    \begin{align}
        P_{\hat{X}(n)}(0) = \pi_0^X \psi(n) + P_{\hat{X}({n-1})}(0) (1 - \psi(n)).
    \end{align}

    For tractability and simplicity of presentation, we set the initial (priori) distribution of the estimate \( \hat{X} \) equal to that of the true state \( X \), i.e., we consider that  
    \[
        \hat{X}(0) \sim \text{Bernoulli}(\pi_1^X).
    \]  
    To examine the evolution of the estimate over time, we consider the first few iterations of the recursive equation.  
    For \( n = 1 \), substituting \( P_{\hat{X}(0)}(0) = \pi_0^X \) into the update equation, we obtain  
    \begin{align}
        P_{\hat{X}(1)}(0) = \pi_0^X \psi(1) + \pi_0^X (1 - \psi(1)) = \pi_0^X.
    \end{align}  
    Similarly, for \( n = 2 \), using \( P_{\hat{X}(1)}(0) = \pi_0^X \), we get  
    \begin{align}
        P_{\hat{X}(2)}(0) = \pi_0^X \psi(2) + \pi_0^X (1 - \psi(2)) = \pi_0^X.
    \end{align}  
    By induction, it follows that for all \( n \), the probability remains unchanged, \( P_{\hat{X}(n)}(0) = \pi_0^X,\text{ and }  P_{\hat{X}(n)}(1) = \pi_1^X\) . Thus, the marginal distribution of \( \hat{X}(n) \) is identical to the prior distribution of \( X(n) \). 

    To derive the joint distribution of the true state \( X(n) \) and the estimate \( \hat{X}(n) \), we consider  
    \[
        P_{X(n), \hat{X}(n)}(i, j) = P(X(n) = i, \hat{X}(n) = j),\ \forall\ i, j \in \mathcal{X}.
    \]  
    Applying the law of total probability, 
    \begin{multline}
        P_{X(n),\hat{X}(n)}(0, 0) = P_{X(n),\hat{X}(n)}(0, 0 \mid S(n)) P(S(n)) \\  
        + P_{X(n),\hat{X}(n)}(0, 0 \mid \bar{S}(n)) P(\bar{S}(n)).
        \label{eq:joint_dist_00_n}
    \end{multline}  
    By applying the definition of conditional probabilities, we have
    \begin{align}
        P_{X(n),\hat{X}(n)}(0, 0) &= P_{\hat{X}(n)}(0\mid X(n) = 0, S(n)) \nonumber\\
        &\quad\quad \times P_{X(n)}(0 \mid S(n)) P(S(n)) \nonumber \\  
        &\quad + P_{\hat{X}(n) }(0 \mid X(n) = 0, \bar{S}(n))  \nonumber\\
        &\quad\quad \times  P_{X(n)}(0 \mid \bar{S}(n)) P(\bar{S}(n)).
    \end{align}  

    Since a successful update ensures \( \hat{X}(n) \) aligns with \( X(n) \), we have \(
    P_{\hat{X}(n)}(0 \mid X(n) = 0, S(n)) = 1.\)  
    Further, if no update occurs (\( \bar{S}(n) \)), the estimate remains unchanged, leading to \(
    P_{\hat{X}(n)}(0 \mid X(n) = 0, \bar{S}(n)) = P_{\hat{X}({n-1})}(0).
    \)  
    Substituting into \eqref{eq:joint_dist_00_n},  
    \begin{align}
        P_{X(n), \hat{X}(n)}(0, 0) = (\pi_0^X)^2 + \pi_0^X \pi_1^X \psi(n).
    \end{align}  
    By symmetry, interchanging 0 with 1 yields  
    \begin{align}
        P_{X(n), \hat{X}(n)}(1, 1) &= (\pi_1^X)^2 + \pi_0^X \pi_1^X \psi(n).
    \end{align}  

    Consider the asymmetric case where the joint probability distribution of the true state and its estimate is given by  
    \begin{align}
        P_{X(n),\hat{X}(n)}(0, 1) &= P_{\hat{X}(n) }(1 \mid X(n) = 0, S(n)) \nonumber\\
        &\quad\quad \times P_{X(n)}(0 \mid S(n)) P(S(n)) \nonumber \\  
        &\quad + P_{\hat{X}(n)}(1 \mid X(n) = 0, \bar{S}(n)) \nonumber\\
        &\quad\quad \times P_{X(n)}(0 \mid \bar{S}(n)) P(\bar{S}(n)).  
        \label{eq:joint_dist_01_raw}  
    \end{align}
    Since the estimate must match the true state in a given time slot under a successful update, it follows that \( P(\hat{X}(n) = 1 \mid X(n) = 0, S(n)) = 0 \). Additionally, by previously established argument, \( P(\hat{X}(n) = 1 \mid X(n) = 0, \bar{S}(n)) = P_{\hat{X}({n-1})}(1) \). Substituting these into \eqref{eq:joint_dist_01_raw}, we obtain  
    \begin{align}
        P_{X(n), \hat{X}(n)}(0, 1) &= P_{\hat{X}({n-1})}(1) \pi_0^X (1 - \psi(n)) \\
        &= \pi_1^X \pi_0^X (1 - \psi(n)).  
    \end{align}  
    By symmetry, interchanging 0 and 1 yields  
    \begin{align}
        P_{X(n), \hat{X}(n)}(1, 0) &= \pi_0^X \pi_1^X (1 - \psi(n)).  
    \end{align}  

    Using the joint distribution of the true state and its estimate, the real-time expected CAE is given by  
    \begin{align}
        \mathbb{E}[\Delta(n)] &= \sum_{i, j} \delta_{i, j} \pi_i^X \pi_j^X + \psi(n) \pi_0^X \pi_1^X \sum_{i, j} (-1)^{i+j} \delta_{i, j}. \label{eq:expected_distortion}
    \end{align}  

    Notably, in this expression, the only policy-dependent term is \( \psi(n) \), while the remaining terms consist of fixed parameters of the model instance. Thus, the real-time expected CAE is given by the expression \( \zeta + \xi \psi(n)\) , where \( \zeta = \sum_{i, j} \delta_{i, j} \pi_i^X \pi_j^X \)  and \( \xi = \pi_0^X \pi_1^X \sum_{i, j} (-1)^{i+j} \delta_{i, j}\)  are known constants. 

\section{Proof of Theorem~\ref{th:aoi_expression_under_srp}}\label{srp_aoi}
    The evolution of the AoI can be modeled as sample paths of an independent DTMC. We consider \(1 \leq w_0 < w_1\). The state space of this DTMC is given by 
    \( \Omega_A = \{w_0, w_0 + 1, \dots, w_1 - 1, w_1, w_1+1, \dots\}\) . Under the SRP \(\mathsf{R}\), the transition probabilities are independent of the current state. We derive the stationary distribution of the resulting DTMC and subsequently compute its expected state value.

    Recall that \( S(n) \) denotes the event of a successful update occurring in time slot \( n \), i.e., \( d(n) = 1 \). For any \( s \in \Omega_A \), we have \( P(A(n+1) = w_0 \mid A(n) = s) \overset{\text{(a)}}{=} P(X(n)=0, S(n))  \overset{\text{(b)}}{=} P(X(n) = 0) P(S(n))\) , where (a) follows from the update rule of AoI, and (b) follows from the independence of \( X(n) \) and \( S(n) \). Since it is known that \( P_{X(n)}(0) = \pi_{0}^{X} \) and \( P(S(n)) = {\psi^\mathsf{R}(n)} = \psi^\mathsf{R} \), we have: \( P(A(n+1) = w_0 \mid A(n) = s) = \pi_{0}^{X} \psi^\mathsf{R}\) . Similarly, it follows that:  \( P(A(n+1) = w_1 \mid A(n) = s) = \pi_{1}^{X} \psi^\mathsf{R}\) , {except when \( s = w_1 - 1 \)}. Using these transition probabilities, define: \( p_0 \triangleq P(A(n+1) = w_0 \mid {A(n)} = s)\) , \( p_1 \triangleq P(A(n+1) = w_1 \mid A(n) = s)\) , and \( p_2 \triangleq P(A(n+1) = s+1 \mid A(n) = s)\) .  
    Since the transition probabilities associated with a state must sum to one, we obtain \( p_0 + p_1 + p_2 = 1 \), and thus \( p_2 = 1 - p_0 - p_1 = 1 - \pi_{0}^{X} \psi^\mathsf{R} - \pi_{1}^{X} \psi^\mathsf{R} = 1 - \big(\pi_{0}^{X} + \pi_{1}^{X} \big) \psi^\mathsf{R} = 1 - \psi^\mathsf{R} \).
    Notably, \( p_2 \) represents the probability that the AoI increments by one unit, which occurs either when no update is attempted or when an attempted update fails. The simplification \( p_2 = 1 - \psi^\mathsf{R} \), where \( \psi^\mathsf{R} \) denotes the probability of a successful update, i.e., \( P(d(n) = 1) \), is intuitive.
    \begin{figure*}[t]
        \centering
        \includegraphics[width=0.75\textwidth]{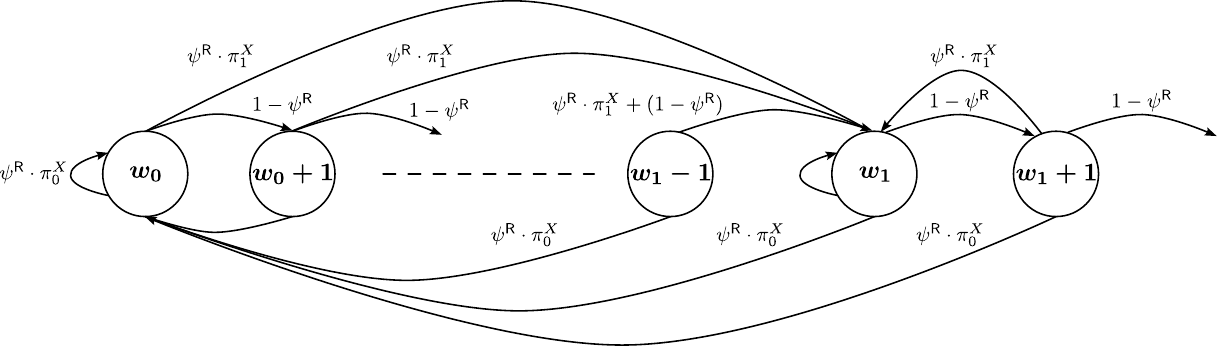}
        \caption{The state transition diagram of the DTMC used to derive the long-term time-averaged expected AoI.}
        \label{fig:aoi_dtmc}
    \end{figure*}
    Using the transition probabilities and the state transition diagram (see Fig.~\ref{fig:aoi_dtmc}), the corresponding state transition matrix, denoted by \( \bm{P}_A\) , can be constructed:
    \begin{equation}
        \resizebox{\columnwidth}{!}{
        \( \bm{P}_A =
        \begin{bmatrix}
            \pi_0^{X} \psi^\mathsf{R} & 1 - \psi^\mathsf{R} & 0  & \cdots & 0 & \pi_1^{X}  \psi^\mathsf{R} & 0 & \cdots \\
            \pi_0^{X} \psi^\mathsf{R} & 0 & 1 - \psi^\mathsf{R} & \cdots & 0 & \pi_1^{X}  \psi^\mathsf{R} & 0 & \cdots \\
            \vdots & \vdots & \vdots & \cdots & \vdots & \vdots & \vdots & \cdots \\
            \pi_0^{X} \psi^\mathsf{R} & 0 & 0 & \cdots & 1 - \psi^\mathsf{R} & \pi_1^{X} \psi^\mathsf{R} & 0 & \cdots \\
            \pi_0^{X} \psi^\mathsf{R} & 0 & 0 & \cdots & 0 & 1 - \psi^\mathsf{R} + \pi_1^{X} \psi^\mathsf{R} & 0 & \cdots \\
            \pi_0^{X} \psi^\mathsf{R} & 0 & 0 & \cdots & 0 & \pi_1^{X} \psi^\mathsf{R} & 1 - \psi^\mathsf{R} & \cdots \\
            \pi_0^{X} \psi^\mathsf{R} & 0 & 0 & \cdots & 0 & \pi_1^{X} \psi^\mathsf{R} & 0 & \cdots \\
            \vdots & \vdots & \vdots & \cdots & \vdots & \vdots & \vdots & \cdots
        \end{bmatrix}\) .}
    \end{equation}
    Using the normalization condition \( \sum_{s = w_0}^{\infty} \pi_{s}^{A} = 1 \), we determine the stationary distribution \( \bm{\pi}_A \) by solving the balance equation \( \bm{\pi}_A = \bm{\pi}_A \bm{P}_A \): 
    \begin{equation}
        \pi_k^A = 
        \begin{cases}
            \pi_0^X \psi^\mathsf{R} (1 - \psi^\mathsf{R})^{k - w_0}, & \text{for } w_0 \leq k < w_1, \\
            \begin{aligned}[t]
                &\pi_0^X \psi^\mathsf{R} (1 - \psi^\mathsf{R})^{k - w_0} \\
                &\quad + \pi_1^X \psi^\mathsf{R} (1 - \psi^\mathsf{R})^{k - w_1},
            \end{aligned}
            & \text{for } k \geq w_1.
        \end{cases}
    \end{equation}
    Using the stationary distribution, the expected real-time AoI is given by \( \mathbb{E}[A(n)] = \sum_{k=w_0}^{\infty} k \pi_k^{A} \).

    Splitting the summation at \( w_1 \), where the stationary distribution expression changes, and substituting the respective expressions, the expected AoI is given by 
    \( \mathbb{E}[A(n)] = \sum_{k=w_0}^{w_1-1} k \pi_0^{X} \psi^\mathsf{R} (1 - \psi^\mathsf{R})^{k-w_0}  + \sum_{k=w_1}^{\infty} k \pi_0^{X} \psi^\mathsf{R} (1 - \psi^\mathsf{R})^{k-w_0}  + \sum_{k=w_1}^{\infty} k \pi_1^{X} \psi^\mathsf{R} (1 - \psi^\mathsf{R})^{k-w_1}\) .

    Combining the first and second summations, and using the result \(\sum_{k=s}^{\infty} k \theta^{k-s} = {\theta}/{(1 - \theta)^2} + {s}/{(1 - \theta)}\) and simplifying, we obtain
    \begin{align}
        \mathbb{E}[A(n)] &= \left( \frac{1}{\psi^\mathsf{R}} - 1 \right) + w_0 + \pi_1^X (w_1 - w_0).
    \end{align}
    Thus, the long-term time average of the expected AoI under SRP is
    \begin{align}
        \lim_{N \rightarrow \infty} \frac{1}{N} \sum_{n=1}^{N} \mathbb{E}[A(n)] = \left( \frac{1}{\psi^\mathsf{R}} - 1 \right) + w_0 + \pi_1^X (w_1 - w_0).
        \label{eq:E_A}
    \end{align}

\section{Proof of Theorem~\ref{th:lowerbound}}\label{lower_bound_aoi}

    \begin{figure}[t]
        \includegraphics[width=\columnwidth]{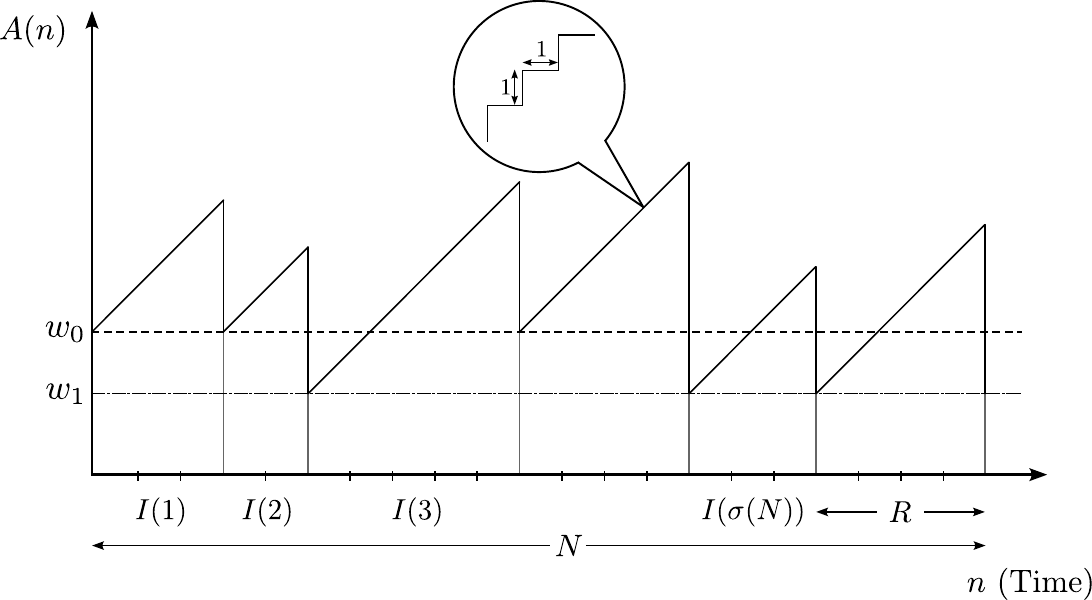}
        \caption{A representative sample path illustrating the time evolution of the Age of Information (AoI) over \( N \) time slots is shown. The variable \( I(k) \) denotes the inter-arrival time between the \((k - 1)^{\text{th}}\) and \(k^{\text{th}}\) packet arrivals. The quantity \( \sigma(N) \) represents the total number of packets received within the observation window of length \( N \).}
        \label{fig:aoi_sample_path}
    \end{figure}

    The proof closely follows the approach of Theorem~1 in \cite{sinhaInfocomThroughput}, with simple yet important modifications, primarily due to the age evolution depending on the estimated state of the monitored environment. Consider a time horizon \( n \in \{1, 2, 3, \dots, N\} \), and define the following variables: \( \sigma(N) \) denotes the total number of successful updates within \( N \) time slots; \( I(m) \) denotes the inter-arrival time between the \( (m-1)^\text{th} \) and \( m^\text{th} \) successful updates, for \( m \in \{1, 2, \dots, \sigma(N)\} \); specifically, \( I(1) \) denotes the number of slots until the first successful update, \( I(2) \) the number of slots between the first and second updates, and so on; \( R \) denotes the number of remaining slots after the last successful update until slot \( N \). (See Fig. \ref{fig:aoi_sample_path}) 

    Additionally, let the binary variable \( s(m) \) indicate whether the \( m^\text{th} \) sample corresponds to State \( 1\) . That is,   
    \begin{align}
        s(m) &\triangleq \begin{cases}
        1, & \text{if the } m^\text{th} \text{ sample corresponds to State \( 1\) }, \\
        0, & \text{otherwise}.
        \end{cases}
    \end{align}
    Considering the monitored process has reached a steady state, the probability of being in either state is governed by the stationary distribution of process \( X\) . By definition of \( s(m) \), it follows that  
    \begin{align}
        s(m) &= 
        \begin{cases}
        1, & \text{with probability } \pi_{1}^{X}, \\
        0, & \text{with probability } 1 - \pi_{1}^{X} = \pi_{0}^{X},
        \end{cases}
    \end{align}
    and 
    \begin{align}
        \mathbb{E}[s(m)]   = \pi_{1}^{X}, \quad \forall\ m. \label{eq:expectation_of_sm}
    \end{align}
    The following holds true:
    \begin{equation}
        N = \sum_{m=1}^{\sigma(N)} I(m) + R. \label{eq:lin_comb}
    \end{equation}    

    For the \( m^\text{th} \) update, during the inter-arrival time between the \( (m-1)^\text{th} \) and \( m^\text{th} \) updates, the AoI evolves as follows: \( w_i, w_i + 1, w_i + 2, \dots, w_i + I(m) - 1 \). Here, \( w_i \) depends on the state estimated from the previous sample. The total accumulated AoI during this inter-arrival period, is given by \( w_i + (w_i + 1) + (w_i + 2) + \dots + (w_i + I(m) - 1)\)  which simplifies to \( (I^2(m) - I(m))/2 + \left(w_0 + s(m{-}1)(w_1 - w_0)\right) I(m)\) .

    Similar to the proof of Theorem~1 in~\cite{sinhaInfocomThroughput}, we consider the sum of AoI over \( N \) slots, \( \sum_{n=1}^{N} A(n) \), which can be expressed in terms of \( I(m) \), the inter-arrival times of updates, \( R \), the number of remaining slots after the final update, \( N \), and \( \sigma(N) \). For a single sample path (see Fig.~\ref{fig:aoi_sample_path}), and by applying Jensen's inequality to the sample mean expressions \( (1 / \sigma(N) )\sum_{m=1}^{\sigma(N)} I(m) \) and \(( 1 / \sigma(N)) \sum_{m=1}^{\sigma(N)} I^2(m) \), which appear in \( \frac{1}{N} \sum_{n=1}^{N} A(n) \), we obtain:
    \begin{multline}\label{eq:L}
        \frac{1}{N}\sum_{n=1}^{N} A(n) 
        \geq \frac{1}{2} \left[\frac{R^2}{N} + \frac{(N - R)^2}{N\sigma(N)} - 1\right] + w_0 \\ 
        + \frac{(w_1 - w_0)}{N} \left(s[\sigma(N)] R + \sum_{m=1}^{\sigma(N)} s(m-1) I(m)\right).
    \end{multline}

    Let the right-hand side of the above equation be denoted as \( L \). So far, we have considered a single sample path of the AoI evolution. Owing to the randomness in the channel state, the number of arrivals, and inter-arrival durations, both \( R \) and \( \sigma(N) \) are random variables for a given \( N \). Taking expectation, we obtain:
    \begin{multline} \label{eq:SbyN}
        \mathbb{E}[L] 
        = \frac{1}{2} \left[\mathbb{E}\left(\frac{R^2}{N}\right) + \mathbb{E}\left(\frac{(N - R)^2}{N\sigma(N)}\right) - 1\right] + w_0\\
        + \frac{w_1 - w_0}{N} \left( \mathbb{E}[s[\sigma(N)] R] + \mathbb{E}\left[\sum_{m=1}^{\sigma(N)} s(m-1) I(m)\right] \right).
    \end{multline}

    We now evaluate the two of the three expectations that appear in \eqref{eq:SbyN}. 
    First,
    \begin{align}
        \mathbb{E}[s[\sigma(N)] \cdot R]  
        &\overset{(a)}{=} \mathbb{E}[s(\sigma(N))] \mathbb{E}[R]\\  
        &\overset{(b)}{=} \pi_{1}^{X} \mathbb{E}[R],
    \end{align}
    where (a) follows from the independence of \( s(\sigma(N)) \) and \( R \), and (b) from homogeneity and~\eqref{eq:expectation_of_sm}. Second,
    \begin{align*}
        &\mathbb{E} \left[\sum_{m=1}^{\sigma(N)} s(m-1) I(m) \right] \nonumber\\
        &\overset{(a)}{=} \mathbb{E}_{\sigma(N)} \left[ \mathbb{E} \left[\sum_{m=1}^{\sigma(N)} s(m-1) I(m) \mid \sigma(N) \right] \right] \\
        &\overset{(b)}{=} \mathbb{E}_{\sigma(N)} \left[ \sum_{m=1}^{\sigma(N)} \mathbb{E} [s(m-1) I(m) \mid \sigma(N)] \right] \\  
        &\overset{(c)}{=} \mathbb{E}_{\sigma(N)} \left[ \sum_{m=1}^{\sigma(N)} \mathbb{E}[s(m-1)] \mathbb{E}[I(m)] \right] \\  
        &\overset{(d)}{=} \mathbb{E}_{\sigma(N)} \left[ \sum_{m=1}^{\sigma(N)} \pi_1^{X} \mathbb{E}[I(m)] \right] \\    
        &\overset{(e)}{=} \pi_1^{X} \mathbb{E}_{\sigma(N)} \left[ \mathbb{E} \left( \sum_{m=1}^{\sigma(N)} I(m) \mid \sigma(N) \right) \right] \\  
        &\overset{(f)}{=} \pi_1^{X} \mathbb{E} \left[ \sum_{m=1}^{\sigma(N)} I(m) \right]  
        \overset{(g)}{=} \pi_1^{X} \mathbb{E}[N - R] 
        \overset{(h)}{=} \pi_1^{X} (N - \mathbb{E}[R]).
    \end{align*}
    Steps (a) and (f) use the law of iterated expectation; (b), (e), and (h) follow from linearity of expectation; (c) from independence of \( s(m-1) \) and \( I(m) \); (d) from~\eqref{eq:expectation_of_sm}; and (g) from~\eqref{eq:lin_comb}.

    Substituting into \eqref{eq:SbyN}, we get:
    \begin{align*}
        \mathbb{E}[L] = \frac{1}{2} \mathbb{E} \left[\frac{R^2}{N} + \frac{(N - R)^2}{N\sigma(N)} - 1\right] + w_0 + \pi_{1}^{X}(w_1 - w_0).
    \end{align*}
    We will now lower bound the right-hand side of the above equation. 

    Note that the smallest value \(R\) can take is \(0\), and we have
    \begin{align}
        \frac{R^2}{N} + \frac{(N - R)^2}{N\sigma(N)} \geq \frac{N}{1 + \sigma(N)}.
    \end{align}
    Subtracting \(1\) from both sides and then taking expectation with respect to the randomness in \(R\) and \(\sigma(N)\), we obtain
    \begin{align}
        \mathbb{E} \left[\frac{R^2}{N} + \frac{(N - R)^2}{N\sigma(N)} - 1\right] \geq \mathbb{E} \left[\frac{N}{1 + \sigma(N)} - 1\right].
    \end{align}
    Thus,
    \begin{align}
        \mathbb{E}[L] \geq \frac{1}{2} \left(N \mathbb{E}\left[\frac{1}{1 + \sigma(N)}\right] - 1\right) + w_0 + \pi_{1}^{X}(w_1 - w_0).
    \end{align}
    Applying Jensen's inequality, we get:
    \begin{align}
        \mathbb{E}[L] \geq \frac{1}{2} \left(\frac{1}{\frac{1}{N} + \frac{1}{N} \mathbb{E}[\sigma(N)]} - 1\right) + w_0 + \pi_{1}^{X}(w_1 - w_0).
    \end{align}

    Noting that:
    \begin{align}
        \lim_{N \to \infty} \frac{1}{N} \mathbb{E}[\sigma(N)] = \lim_{N \to \infty} \frac{1}{N} \sum_{n=1}^{N} \mathbb{E}[d(n)],
    \end{align}
    and using the definition of long-term throughput~\eqref{eq:long_term_throughput}, and from \eqref{eq:L}, we finally obtain:
    \begin{align}
        \bar{A} &= \lim_{N \to \infty} \frac{1}{N} \mathbb{E} \left[\sum_{n=1}^{N} A(n)\right] \nonumber \\ 
        &\geq \frac{1}{2} \left(\frac{1}{\hat{q}} - 1\right) + w_0 + \pi_{1}^{X}(w_1 - w_0).
    \end{align}

\end{appendices}
\balance

\bibliography{references}

\begin{thebibliography}{10}
\providecommand{\url}[1]{#1}
\csname url@samestyle\endcsname
\providecommand{\newblock}{\relax}
\providecommand{\bibinfo}[2]{#2}
\providecommand{\BIBentrySTDinterwordspacing}{\spaceskip=0pt\relax}
\providecommand{\BIBentryALTinterwordstretchfactor}{4}
\providecommand{\BIBentryALTinterwordspacing}{\spaceskip=\fontdimen2\font plus
\BIBentryALTinterwordstretchfactor\fontdimen3\font minus \fontdimen4\font\relax}
\providecommand{\BIBforeignlanguage}[2]{{%
\expandafter\ifx\csname l@#1\endcsname\relax
\typeout{** WARNING: IEEEtran.bst: No hyphenation pattern has been}%
\typeout{** loaded for the language `#1'. Using the pattern for}%
\typeout{** the default language instead.}%
\else
\language=\csname l@#1\endcsname
\fi
#2}}
\providecommand{\BIBdecl}{\relax}
\BIBdecl

\bibitem{Pappas2021ICAS}
N.~Pappas and M.~Kountouris, ``{Goal-Oriented Communication for Real-Time Tracking in Autonomous Systems},'' in \emph{IEEE ICAS}, 2021, pp. 1--5.

\bibitem{Luo2024semantic-aware}
J.~Luo and N.~Pappas, ``{Semantic-Aware Remote Estimation of Multiple Markov Sources Under Constraints},'' in \emph{WiOpt}, 2024, pp. 15--21.

\bibitem{luo2025cost}
------, ``On the cost of consecutive estimation error: Significance-aware non-linear aging,'' \emph{IEEE Trans. Inf. Theory}, Jun. 2025.

\bibitem{kaul2012realtime}
S.~Kaul, R.~Yates, and M.~Gruteser, ``{Real-time status: How often should one update?}'' in \emph{IEEE INFOCOM}, 2012, pp. 2731--2735.

\bibitem{yates2021survey}
R.~D. Yates, Y.~Sun, D.~R. Brown, S.~K. Kaul, E.~Modiano, and S.~Ulukus, ``{Age of Information: An Introduction and Survey},'' \emph{IEEE J. Sel. Areas Commun.}, vol.~39, no.~5, pp. 1183--1210, 2021.

\bibitem{kahraman2024iotsurvey}
İ.~Kahraman, A.~Köse, M.~Koca, and E.~Anarim, ``{Age of Information in Internet of Things: A Survey},'' \emph{IEEE Internet Things J.}, vol.~11, no.~6, pp. 9896--9914, 2024.

\bibitem{wang2023wirelesssurvey}
H.~Wang, Q.~Sun, and S.~Wang, ``{A survey on the optimisation of age of information in wireless networks},'' \emph{Int. J. Web Grid Serv.}, vol.~19, no.~1, pp. 1--33, 2023.

\bibitem{saurav2021cost}
K.~Saurav and R.~Vaze, ``{Minimizing the Sum of Age of Information and Transmission Cost under Stochastic Arrival Model},'' in \emph{IEEE INFOCOM}, 2021, pp. 1--10.

\bibitem{inan2021optimal}
Y.~Inan, R.~Inovan, and E.~Telatar, ``{Optimal policies for age and distortion in a discrete-time model},'' in \emph{IEEE ITW}, 2021, pp. 1--6.

\bibitem{js2023distortion}
J.~S., N.~Pappas, and R.~V. Bhat, ``{Distortion Minimization with Age of Information and Cost Constraints},'' in \emph{WiOpt}, Singapore, Singapore, 2023, pp. 1--8.

\bibitem{agarwal2021energy}
S.~Agarwal and R.~V. Bhat, ``{Age of Information Minimization in Energy Harvesting Sensors with Non-Ideal Batteries},'' in \emph{IEEE GLOBECOM}, 2021, pp. 1--6.

\bibitem{xu2024optimal}
C.~Xu, X.~Zhang, H.~H. Yang, X.~Wang, N.~Pappas, D.~Niyato, and T.~Q.~S. Quek, ``{Optimal Status Updates for Minimizing Age of Correlated Information in IoT Networks With Energy Harvesting Sensors},'' \emph{IEEE Trans. Mobile Comput.}, vol.~23, no.~6, pp. 6848--6864, 2024.

\bibitem{ggb2021distortion}
G.~G. B., J.~S., and R.~V. Bhat, ``{Age of Information Minimization with Power and Distortion Constraints in Multiple Access Channels},'' in \emph{WiOpt}, Philadelphia, PA, USA, 2021, pp. 1--7.

\bibitem{jayanth2023aopi}
S.~Jayanth and R.~V. Bhat, ``{Age of Processed Information Minimization Over Fading Multiple Access Channels},'' \emph{IEEE Trans. Wireless Commun.}, vol.~22, no.~3, pp. 1664--1676, 2023.

\bibitem{kadota2019scheduling}
I.~Kadota, A.~Sinha, and E.~Modiano, ``{Scheduling Algorithms for Optimizing Age of Information in Wireless Networks With Throughput Constraints},'' \emph{IEEE/ACM Trans. Netw.}, vol.~27, no.~4, pp. 1359--1372, 2019.

\bibitem{raikar2024reported}
S.~S. Raikar and R.~Bhat, ``{Optimizing Reported Age of Information with Short Error Correction and Detection Codes},'' in \emph{IEEE WCNC}, Dubai, United Arab Emirates, 2024, pp. 1--6.

\bibitem{Kountouris2021Semantics}
M.~Kountouris and N.~Pappas, ``Semantics-empowered communication for networked intelligent systems,'' \emph{IEEE Commun. Mag.}, vol.~59, no.~6, pp. 96--102, 2021.

\bibitem{Ayan2019AoIvsVoI}
O.~Ayan, M.~Vilgelm, M.~Kl\"{u}gel, S.~Hirche, and W.~Kellerer, ``{Age-of-Information vs. Value-of-Information Scheduling for Cellular Networked Control Systems},'' in \emph{ACM/IEEE ICCPS}, 2019, pp. 109--117.

\bibitem{Lan2021Semantic}
Q.~Lan, D.~Wen, Z.~Zhang, Q.~Zeng, X.~Chen, P.~Popovski, and K.~Huang, ``{What is Semantic Communication? A View on Conveying Meaning in the Era of Machine Intelligence},'' \emph{J. Commun. Inf. Netw.}, vol.~6, no.~4, pp. 336--371, 2021.

\bibitem{chen2025age}
Z.~Chen, T.~Yang, S.~Fang, L.~Zhen, N.~Pappas, M.~Wang, D.~O. Wu, and T.~Q.~S. Quek, ``{Age Analysis of Critical Information in Internet of Vehicle Networks With Service Confluence},'' \emph{IEEE Open J. Commun. Soc.}, vol.~6, pp. 2556--2565, 2025.

\bibitem{salimnejad2024realtime}
M.~Salimnejad, M.~Kountouris, and N.~Pappas, ``{Real-Time Reconstruction of Markov Sources and Remote Actuation Over Wireless Channels},'' \emph{IEEE Trans. Commun.}, vol.~72, no.~5, pp. 2701--2715, 2024.

\bibitem{trevlakis2024toward}
S.~E. Trevlakis, N.~Pappas, and A.-A.~A. Boulogeorgos, ``{Toward Natively Intelligent Semantic Communications and Networking},'' \emph{IEEE Open J. Commun. Soc.}, vol.~5, pp. 1486--1503, 2024.

\bibitem{rajaraman2021aqi}
N.~Rajaraman, R.~Vaze, and G.~Reddy, ``{Not Just Age but Age and Quality of Information},'' \emph{IEEE J. Sel. Areas Commun.}, vol.~39, no.~5, pp. 1325--1338, 2021.

\bibitem{bastopcu2019distortion}
M.~Bastopcu and S.~Ulukus, ``{Age of Information for Updates with Distortion},'' in \emph{IEEE ITW}, 2019, pp. 1--5.

\bibitem{peng2024goal}
F.~Peng, X.~Wang, and X.~Chen, ``Goal-oriented communication for status updating over random delay channel,'' in \emph{IEEE INFOCOM Workshops}, 2024, pp. 1--6.

\bibitem{Chakravorty2017TAC}
J.~Chakravorty and A.~Mahajan, ``{Fundamental Limits of Remote Estimation of Autoregressive Markov Processes Under Communication Constraints},'' \emph{IEEE Trans. Autom. Control}, vol.~62, no.~3, pp. 1109--1124, 2017.

\bibitem{kam2020age}
C.~Kam, S.~Kompella, and A.~Ephremides, ``{Age of Incorrect Information for Remote Estimation of a Binary Markov Source},'' in \emph{IEEE INFOCOM WKSHPS}, 2020, pp. 1--6.

\bibitem{joshi2021minimization}
B.~Joshi, R.~V. Bhat, B.~N. Bharath, and R.~Vaze, ``{Minimization of Age of Incorrect Estimates of Autoregressive Markov Processes},'' in \emph{WiOpt}, 2021, pp. 1--8.

\bibitem{zakeri2025semantic}
A.~Zakeri, M.~Moltafet, and M.~Codreanu, ``Semantic-aware sampling and transmission in real-time tracking systems: A pomdp approach,'' \emph{IEEE Trans. Commun.}, vol.~73, no.~7, pp. 4898--4913, 2025.

\bibitem{salimnejad2023state}
M.~Salimnejad, M.~Kountouris, and N.~Pappas, ``State-aware real-time tracking and remote reconstruction of a markov source,'' \emph{J. Commun. Netw.}, vol.~25, no.~5, pp. 657--669, 2023.

\bibitem{gao2025goal}
J.~Cao, E.~Kurniawan, A.~Boonkajay, N.~Pappas, S.~Sun, and P.~Popovski, ``Goal-oriented communication, estimation, and control over bidirectional wireless links,'' \emph{IEEE Trans. Commun.}, vol.~73, no.~5, pp. 3031--3045, 2025.

\bibitem{Fountoulakis2023COAE}
E.~Fountoulakis, N.~Pappas, and M.~Kountouris, ``Goal-oriented policies for cost of actuation error minimization in wireless autonomous systems,'' \emph{IEEE Communications Letters}, vol.~27, no.~9, pp. 2323--2327, 2023.

\bibitem{sinhaInfocomThroughput}
I.~Kadota, A.~Sinha, and E.~Modiano, ``{Optimizing Age of Information in Wireless Networks with Throughput Constraints},'' in \emph{IEEE INFOCOM}, 2018, pp. 1844--1852.

\end{thebibliography}
\bibliographystyle{IEEEtran}

\end{document}